\documentclass{article}
\usepackage{color}
\usepackage[all]{xy}
\CompileMatrices
\usepackage{qsymbols}
\usepackage{amsmath,mathrsfs} 
\usepackage{QED}
\usepackage{fancybox} 
\usepackage[dvips]{graphicx}
\usepackage{pstricks,pst-node} \psset{rowsep=.7cm,colsep=.7cm}
\psset{arrowsize=3.5pt 3.8,arrowlength=0.7}
\psset{labelsep=.1cm,nodesep=2pt}

 \newif\ifcomment
 \commentfalse 

\definecolor{darkbrown}{cmyk}{.3,.75,.75,.15}
\newcommand{\brown}[1]{{\color{darkbrown} #1}}
\newcommand{\rouge}[1]{{\color{red} #1}}
\newcommand{\bl}[1]{\textcolor{blue}{#1}}
\newcommand{\verte}[1]{{\color{green}#1}}
\definecolor{vertfonce}{rgb}{0,.5,0}
\newcommand{\verdir}[1]{{\color{vertfonce} #1}}

\renewcommand {\l}{\langle} \renewcommand {\r}{\rangle}

\newtheorem{definition}{Definition}
\newtheorem{lemma}{Lemma}
\newtheorem{theorem}{Theorem}
\newtheorem{example}{Example}

\newcommand{\Alice}{\textit{Alice}}
\newcommand{\Beth}{\textit{Beth}}


\newcommand{\conv}[1]{"-->"_{#1}}
\newcommand{\pref}[1]{".>"_{#1}}
\newcommand{\prefAlice}{\raisebox{2pt}{\hspace{-30pt}
  \begin{psmatrix}
    &[name=a]&[name=b]\ncline[arrows=->,linewidth=.04,linecolor=red,linestyle=dotted,arrowscale=.5]{a}{b}
  \end{psmatrix}
}}
\newcommand{\prefBeth}{\raisebox{2pt}{\hspace{-23pt}
  \begin{psmatrix}
    &[name=a]&[name=b]\ncline[arrows=->,linewidth=.015,linecolor=blue,linestyle=dotted,arrowscale=.5]{a}{b}
  \end{psmatrix}
}}
\newcommand{\ConvA}{\rouge{\dashrightarrow}}
\newcommand{\ConvB}{\bl{\dashrightarrow}}
\newcommand{\CoM}[1]{\xymatrix @C 15pt{\ar@{->}[r]&}_{#1}}
\newcommand{\gCoM}{\xymatrix @C 20pt{\ar@{->}[r]&}}
\newcommand{\bydef}{\triangleq}
\newcommand{\eqdef}{\mathop{\triangleq}}

\newcommand{\A}{\ensuremath{\mathcal{A}}}
\renewcommand{\S}{\ensuremath{\mathcal{S}}}

\newcommand{\aConvFct}[2]{\ensuremath{\mathrel{\xymatrix@C3pt{\ar@{>-}^{#1}_{#2}[r]&}}}\!\!}

\newcommand{\aConvEqFct}[2]{\ensuremath{\mathrel{\xymatrix@C3pt{\ar@{>-<}^{#1}_{#2}[r]&}}}}

\newcommand{\lAgentConvFct}[2]{\ensuremath{\mathop{\xymatrix@C3pt{\ar@{>-}^{#1}_{#2}[r]&}}}}

\newcommand{\EqFctName}[2] {\mathrm{Eq}^{#1}_{#2}}
\newcommand{\EqFct}[3] {\EqFctName{{#1}}{{#2}}({#3})}

\DeclareMathSymbol{\leggedrightarrow}{\mathord}{AMSa}{"4B}

\newcommand{\freeComName}[1] {\comName{#1}{}}

\newcommand{\freeComsName}[1] {{\rightarrow_{#1}^*}}
\newcommand{\freeComs}[3]{{#2}\freeComsName{#1}{#3}}
\newcommand{\comName}[2] {\rightarrow_{#1}^{#2}}


\newcommand{\Syn}{\mathcal{S}} \newcommand{\syns}{\mathrm{C}}
\newcommand{\syn}{\mathrm{s}} 


  \newcommand{\optionsName} {\mathscr{U}}
\newcommand{\options}[2] {\optionsName({#1})}

\newcommand{\Gname} {\mathrm{G}} 
 
\newcommand{\cpG}[1]{\ensuremath{\Gname_{{#1}}}}


\newcommand{\agamewithtwoSCCs}{
  \begin{figure}
    \centering
    \[
\begin{psmatrix}[colsep=.4cm,rowsep=.4cm]
  &&&&& [name=42]{\mathbf{4|2}}\\
  &&&& [name=43]{\mathbf{4|3}} && [name=41]{\mathbf{4|1}}\\
  &&& [name=12]{\mathbf{1|2}} &&&& [name=32]{\mathbf{3|2}} \\
  &[name=1w]{\mathbf{\rouge{1|\omega}}}& [name=13]{\mathbf{1|3}} &&&&&& [name=31]{\mathbf{3|1}} \\
  &&& [name=14]{\mathbf{1|4}} &&&& [name=34]{\mathbf{3|4}} \\
  &&&& [name=23]{\mathbf{2|3}} && [name=21]{\mathbf{2|1}} \\
  &&&&& [name=24]{\mathbf{2|4}}
 \ncarc[arrows=->,linewidth=.03,arcangle=50]{1w}{12}
 \ncarc[arrows=->,linewidth=.03]{1w}{13}
 \ncarc[arrows=->,linewidth=.03,arcangle=-50]{1w}{14}
 \ncarc[arrows=->,linewidth=.03,arcangle=30]{41}{42}
 \ncarc[arrows=->,linewidth=.03,arcangle=30]{42}{43}
 \ncarc[arrows=->,linewidth=.03,arcangle=-30]{12}{13}
 \ncarc[arrows=->,linewidth=.03,arcangle=-30]{13}{14}
 \ncarc[arrows=->,linewidth=.03,arcangle=-30]{31}{32}
 \ncarc[arrows=->,linewidth=.03,arcangle=-30]{34}{31}
 \ncarc[arrows=->,linewidth=.03,arcangle=30]{23}{24}
 \ncarc[arrows=->,linewidth=.03,arcangle=30]{24}{21}
 \ncarc[arrows=->,linewidth=.03,arcangle=-40]{42}{12}
 \ncarc[arrows=->,linewidth=.03,arcangle=-40]{32}{42}
 \ncarc[arrows=->,linewidth=.03,arcangle=40]{43}{13}
 \ncarc[arrows=->,linewidth=.03,arcangle=40]{31}{41}
 \ncarc[arrows=->,linewidth=.03,arcangle=40]{13}{23}
 \ncarc[arrows=->,linewidth=.03,arcangle=40]{21}{31}
 \ncarc[arrows=->,linewidth=.03,arcangle=-40]{14}{24}
 \ncarc[arrows=->,linewidth=.03,arcangle=-40]{24}{34}
\end{psmatrix}
 \]
    \caption{A game with two SCC's}
    \label{fig:2SCC}
  \end{figure} 
}

\newcommand{\gametwo}{
  \begin{figure}[t]
    \begin{footnotesize}
      \[
      \begin{psmatrix}[colsep=.4cm,rowsep=.4cm]
        &&&& [name=42]{\mathbf{4|2}}\\
        &&& [name=43]{\mathbf{4|3}} && [name=41]{\mathbf{4|1}}\\
        && [name=12]{\mathbf{1|2}} &&&& [name=32]{\mathbf{3|2}} \\
        & [name=13]{\mathbf{1|3}} &&&&&& [name=31]{\mathbf{3|1}} \\
        && [name=14]{\mathbf{1|4}} &&&& [name=34]{\mathbf{3|4}} \\
        &&& [name=23]{\mathbf{2|3}} && [name=21]{\mathbf{2|1}} \\
        &&&& [name=24]{\mathbf{2|4}}
        \ncarc[arrows=->,linewidth=.03,linestyle=dashed,linecolor=blue,arcangle=30]{41}{42}
        \ncarc[arrows=->,linewidth=.03,linestyle=dashed,linecolor=blue,arcangle=30]{42}{43}
        \ncarc[arrows=->,linewidth=.03,linestyle=dashed,linecolor=blue,arcangle=-30]{12}{13}
        \ncarc[arrows=->,linewidth=.03,linestyle=dashed,linecolor=blue,arcangle=-30]{13}{14}
        \ncarc[arrows=->,linewidth=.03,linestyle=dashed,linecolor=blue,arcangle=-30]{31}{32}
        \ncarc[arrows=->,linewidth=.03,linestyle=dashed,linecolor=blue,arcangle=-30]{34}{31}
        \ncarc[arrows=->,linewidth=.03,linestyle=dashed,linecolor=blue,arcangle=30]{23}{24}
        \ncarc[arrows=->,linewidth=.03,linestyle=dashed,linecolor=blue,arcangle=30]{24}{21}
        \ncarc[arrows=->,linewidth=.03,linestyle=dashed,linecolor=red,arcangle=-40]{42}{12}
        \ncarc[arrows=->,linewidth=.03,linestyle=dashed,linecolor=red,arcangle=-40]{32}{42}
        \ncarc[arrows=->,linewidth=.03,linestyle=dashed,linecolor=red,arcangle=40]{43}{13}
        \ncarc[arrows=->,linewidth=.03,linestyle=dashed,linecolor=red,arcangle=40]{31}{41}
        \ncarc[arrows=->,linewidth=.03,linestyle=dashed,linecolor=red,arcangle=40]{13}{23}
        \ncarc[arrows=->,linewidth=.03,linestyle=dashed,linecolor=red,arcangle=40]{21}{31}
        \ncarc[arrows=->,linewidth=.03,linestyle=dashed,linecolor=red,arcangle=-40]{14}{24}
        \ncarc[arrows=->,linewidth=.03,linestyle=dashed,linecolor=red,arcangle=-40]{24}{34}
      \end{psmatrix}
      \quad
      \begin{psmatrix}[colsep=.4cm,rowsep=.4cm]
        &&&& [name=42]{\mathbf{4|2}}\\
        &&& [name=43]{\mathbf{4|3}} && [name=41]{\mathbf{4|1}}\\
        && [name=12]{\mathbf{1|2}} &&&& [name=32]{\mathbf{3|2}} \\
        & [name=13]{\mathbf{1|3}} &&&&&& [name=31]{\mathbf{3|1}} \\
        && [name=14]{\mathbf{1|4}} &&&& [name=34]{\mathbf{3|4}} \\
        &&& [name=23]{\mathbf{2|3}} && [name=21]{\mathbf{2|1}} \\
        &&&& [name=24]{\mathbf{2|4}}
        \ncarc[arrows=->,linewidth=.03,linestyle=dotted,arcangle=30]{41}{42}
        \ncarc[arrows=->,linewidth=.03,linestyle=dotted,arcangle=30]{41}{43}
        \ncarc[arrows=->,linewidth=.03,linestyle=dotted,arcangle=30]{42}{43}
        \ncarc[arrows=->,linewidth=.03,linestyle=dotted,arcangle=-30]{12}{13}
        \ncarc[arrows=->,linewidth=.03,linestyle=dotted,arcangle=30]{12}{14}
        \ncarc[arrows=->,linewidth=.03,linestyle=dotted,arcangle=-30]{13}{14}
        \ncarc[arrows=->,linewidth=.03,linestyle=dotted,arcangle=-30]{31}{32}
        \ncarc[arrows=->,linewidth=.03,linestyle=dotted,arcangle=-30]{34}{31}
        \ncarc[arrows=->,linewidth=.03,linestyle=dotted,arcangle=30]{34}{32}
        \ncarc[arrows=->,linewidth=.03,linestyle=dotted,arcangle=30]{23}{24}
        \ncarc[arrows=->,linewidth=.03,linestyle=dotted,arcangle=30]{23}{21}
        \ncarc[arrows=->,linewidth=.03,linestyle=dotted,arcangle=30]{24}{21}
        \ncarc[arrows=->,linewidth=.03,linestyle=dotted,arcangle=-40]{42}{12}
        \ncarc[arrows=->,linewidth=.03,linestyle=dotted,arcangle=-40]{32}{42}
        \ncarc[arrows=->,linewidth=.03,linestyle=dotted,arcangle=30]{32}{12}
        \ncarc[arrows=->,linewidth=.03,linestyle=dotted,arcangle=40]{43}{13}
        \ncarc[arrows=->,linewidth=.03,linestyle=dotted,arcangle=40]{43}{23}
        \ncarc[arrows=->,linewidth=.03,linestyle=dotted,arcangle=40]{31}{41}
        \ncarc[arrows=->,linewidth=.03,linestyle=dotted,arcangle=40]{13}{23}
        \ncarc[arrows=->,linewidth=.03,linestyle=dotted,arcangle=40]{21}{31}
        \ncarc[arrows=->,linewidth=.03,linestyle=dotted,arcangle=40]{21}{41}
        \ncarc[arrows=->,linewidth=.03,linestyle=dotted,arcangle=-40]{14}{24}
        \ncarc[arrows=->,linewidth=.03,linestyle=dotted,arcangle=40]{14}{34}
        \ncarc[arrows=->,linewidth=.03,linestyle=dotted,arcangle=-40]{24}{34}
      \end{psmatrix}
      \]
      \[
      \begin{psmatrix}[colsep=.4cm,rowsep=.4cm]
        &&&& [name=42]{\mathbf{4|2}}\\
        &&& [name=43]{\mathbf{4|3}} && [name=41]{\mathbf{4|1}}\\
        && [name=12]{\mathbf{1|2}} &&&& [name=32]{\mathbf{3|2}} \\
        & [name=13]{\mathbf{1|3}} &&&&&& [name=31]{\mathbf{3|1}} \\
        && [name=14]{\mathbf{1|4}} &&&& [name=34]{\mathbf{3|4}} \\
        &&& [name=23]{\mathbf{2|3}} && [name=21]{\mathbf{2|1}} \\
        &&&& [name=24]{\mathbf{2|4}}
        \ncarc[arrows=->,linewidth=.03,linecolor=blue,arcangle=30]{41}{42}
        \ncarc[arrows=->,linewidth=.03,linecolor=blue,arcangle=30]{42}{43}
        \ncarc[arrows=->,linewidth=.03,linecolor=blue,arcangle=-30]{12}{13}
        \ncarc[arrows=->,linewidth=.03,linecolor=blue,arcangle=-30]{13}{14}
        \ncarc[arrows=->,linewidth=.03,linecolor=blue,arcangle=-30]{31}{32}
        \ncarc[arrows=->,linewidth=.03,linecolor=blue,arcangle=-30]{34}{31}
        \ncarc[arrows=->,linewidth=.03,linecolor=blue,arcangle=30]{23}{24}
        \ncarc[arrows=->,linewidth=.03,linecolor=blue,arcangle=30]{24}{21}
        \ncarc[arrows=->,linewidth=.03,linecolor=red,arcangle=-40]{42}{12}
        \ncarc[arrows=->,linewidth=.03,linecolor=red,arcangle=-40]{32}{42}
        \ncarc[arrows=->,linewidth=.03,linecolor=red,arcangle=40]{43}{13}
        \ncarc[arrows=->,linewidth=.03,linecolor=red,arcangle=40]{31}{41}
        \ncarc[arrows=->,linewidth=.03,linecolor=red,arcangle=40]{13}{23}
        \ncarc[arrows=->,linewidth=.03,linecolor=red,arcangle=40]{21}{31}
        \ncarc[arrows=->,linewidth=.03,linecolor=red,arcangle=-40]{14}{24}
        \ncarc[arrows=->,linewidth=.03,linecolor=red,arcangle=-40]{24}{34}
      \end{psmatrix}
      \quad
      \begin{psmatrix}[colsep=.4cm,rowsep=.4cm]
        &&&& [name=42]{\mathbf{4|2}}\\
        &&& [name=43]{\mathbf{4|3}} && [name=41]{\mathbf{4|1}}\\
        && [name=12]{\mathbf{1|2}} &&&& [name=32]{\mathbf{3|2}} \\
        & [name=13]{\mathbf{1|3}} &&&&&& [name=31]{\mathbf{3|1}} \\
        && [name=14]{\mathbf{1|4}} &&&& [name=34]{\mathbf{3|4}} \\
        &&& [name=23]{\mathbf{2|3}} && [name=21]{\mathbf{2|1}} \\
        &&&& [name=24]{\mathbf{2|4}}
        \ncarc[arrows=->,linewidth=.03,arcangle=30]{41}{42}
        \ncarc[arrows=->,linewidth=.03,arcangle=30]{42}{43}
        \ncarc[arrows=->,linewidth=.03,arcangle=-30]{12}{13}
        \ncarc[arrows=->,linewidth=.03,arcangle=-30]{13}{14}
        \ncarc[arrows=->,linewidth=.03,arcangle=-30]{31}{32}
        \ncarc[arrows=->,linewidth=.03,arcangle=-30]{34}{31}
        \ncarc[arrows=->,linewidth=.03,arcangle=30]{23}{24}
        \ncarc[arrows=->,linewidth=.03,arcangle=30]{24}{21}
        \ncarc[arrows=->,linewidth=.03,arcangle=-40]{42}{12}
        \ncarc[arrows=->,linewidth=.03,arcangle=-40]{32}{42}
        \ncarc[arrows=->,linewidth=.03,arcangle=40]{43}{13}
        \ncarc[arrows=->,linewidth=.03,arcangle=40]{31}{41}
        \ncarc[arrows=->,linewidth=.03,arcangle=40]{13}{23}
        \ncarc[arrows=->,linewidth=.03,arcangle=40]{21}{31}
        \ncarc[arrows=->,linewidth=.03,arcangle=-40]{14}{24}
        \ncarc[arrows=->,linewidth=.03,arcangle=-40]{24}{34}
      \end{psmatrix}
      \]
    \end{footnotesize}
    \caption{Conversion, preference and change of mind for the second
      version of the square game}
    \label{fig:conv_pref_square2}
  \end{figure}
}

\begin{document}

\title{Conversion/Preference Games}

\author{St\'ephane Le Roux   \and Pierre Lescanne \\
 Universit\'e de Lyon, CNRS (LIP), ENS de Lyon,\\
 46 all\'ee d'Italie, 69364 Lyon, France \\[2pt]
  \and Ren\'e Vestergaard \\
   School of Information Science, JAIST, 1-1 Asahidai, \\
   Nomi, Ishikawa 923-1292, Japan}
 
\date{}

 \maketitle

   \begin{abstract} We introduce the concept of
     \emph{Conversion/Preference Games}, or CP games for short. CP
     games generalize the standard notion of \emph{strategic games}.
     First we exemplify the use of CP games. Second we formally
     introduce and define the CP-games formalism.  Then we sketch two
     `real-life' applications, namely a connection between CP games
     and gene regulation networks, and the use of CP games to
     formalize implied information in Chinese Wall security.  We end
     with a study of a particular fixed-point construction over CP
     games and of the resulting existence of equilibria in possibly
     infinite games.
   \end{abstract}

\section{Introduction}
\label{sec:intro}

We give a stand-alone account of Conversion/Preference games or CP games, as originally used in~\cite{LeRouxLescanneVestergaard:RGT-Nash}.
CP games are built from a set of \emph{players} and
a set of (game) \emph{situations}.  The ability of the players to change a situation to another is formalised in \emph{conversion} relations.  A \emph{preference} relation dictates how the players compare the different situations against each other. The three main aims of this article are to show i) that discrete Nash-style game theory is possible and natural, ii) that the two basic CP concepts of \emph{Conversion} and \emph{Preference} are of wider interest, and iii) that game-theoretic notions are both applicable and relevant in situations where no payoff function need exist, or where the payoff concept would dramatically alter what aspects of the game are being considered.

\section{Basic concepts}
\label{sec:basic}

To start with, let us give the two main notions of games.  First, a
game involves \emph{players}. Second, a game is characterized by
situations.  In CP games these situations will be called
\emph{situations} or sometimes \emph{synopses} or \emph{game
  situations}.  A player can move from one situation to another, but
she\footnote{See the preface of~\cite{osborne04a} for the use of
  personal pronouns} does that under some constraints as she has no
total freedom to perform her moves, therefore a relation called
\emph{conversion} is defined for each player; it tells what moves a
player is allowed to perform.  Conversion of player \Alice{} will be
written $\conv{\Alice}$.  As such, conversion tells basically the
rules of the game.  In chess it would say \emph{``a~player can move
  her bishop along a diagonal''}, but it does not tell the game line
of the player.  In other words it does not tell why the player chooses
to move or to ``convert'' her situation.  Another relation called
\emph{preference} compares situations in order for a player to choose
a « better » move or to perform a « better » conversion.  Preference
of player \Beth{} will be written $\pref{\Beth}$ and when we write
$s\pref{\Beth}s'$ we mean that \Beth{} prefers $s'$ to~$s$ or, rather
than $s$, she chooses $s'$ or in situation~$s$ she is attracted toward
situation~$s'$.  Preference (or choice) is somewhat disconnected from
conversion, a player can clearly prefer a situation she cannot move to
and vice versa she can move to a situation she does not prefer.
Moreover players may share the same conversion relation, but this not
a rule and the may share or not the same preference relation or not.
Those different situation sill be illustrated by examples throughout
the article.

A key concept in games is this of \emph{equilibrium}.  As a player can
convert a situation, she can convert it into a situation she likes
better, in the sense that she prefers the new situation she converted
to.  A player is \emph{happy} in a situation, if there is no situation
she can convert into and she prefers.  A~situation is an equilibrium
if each player is happy with this situation.  We will see that this
concept of equilibrium captures and generalizes the concept known as
\emph{Nash equilibrium} in strategic games, hence the name
\emph{abstract Nash equilibrium}.

\section{Some examples}
\label{sec:examples}

Let us present the above concepts of conversion, preference and
equilibrium through examples.  We will introduce a new concept called
\emph{change of mind}.

\subsection{A simple game on a square}
\label{sec:square-game}

As an introduction, we will look at variations of a simple game on a
board.

\subsubsection{A first version}

Imagine a simple game where \Alice{} and \Beth{} play using tokens on
a square.  We number the four positions as $1$, $2$, $3$ and $4$.
\begin{scriptsize}
  \[ \xymatrix { &*++[o][F-]{1}\ar@{-}[dl]\ar@{-}[dr]\\ *++[o][F-]{4}
\ar@{-}[dr]&& *++[o][F-]{2}\ar@{-}[dl]\\ &*++[o][F-]{3} } \qquad
\qquad \qquad %
\xymatrix {
&*++[o][F-]{\rouge{`(!)}}\ar@{-}[dl]\ar@{-}[dr]\\
*++[o][F-]{\phantom{`(!)}} \ar@{-}[dr]&&
*++[o][F-]{\bl{`[!]}}\ar@{-}[dl]\\ &*++[o][F-]{\phantom{`(!)}}  }
  \]
\end{scriptsize}

Assume that player \Alice{} has a red round token and that player
\Beth{} has a blue squared token.  The two players place their tokens
on vertices and then they move along edges.  They can also decide
not to move.  Assume that \Alice{} and \Beth{} never put their token
on a vertex taken by the other player and  a position further than this impossible situation is better than a position closer.  In other words, a position with Alice on vertex $i$ and Beth on vertex $j$ with $i-j$ even is preferred to a position with $i-j$ odd.

\begin{figure}[ht]
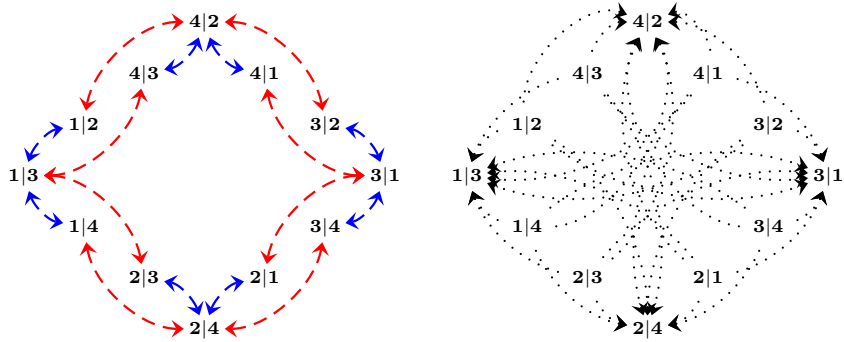

\noindent
\begin{scriptsize}
\(
  \begin{psmatrix}[colsep=.4cm,rowsep=.4cm]
  &&&& [name=42]{\mathbf{4|2}}\\
  &&& [name=43]{\mathbf{4|3}} && [name=41]{\mathbf{4|1}}\\
  && [name=12]{\mathbf{1|2}} &&&& [name=32]{\mathbf{3|2}} \\
  & [name=13]{\mathbf{1|3}} &&&&&& [name=31]{\mathbf{3|1}} \\
  && [name=14]{\mathbf{1|4}} &&&& [name=34]{\mathbf{3|4}} \\
  &&& [name=23]{\mathbf{2|3}} && [name=21]{\mathbf{2|1}} \\
  &&&& [name=24]{\mathbf{2|4}}
 \ncarc[arrows=<->,linewidth=.03,linestyle=dashed,linecolor=blue,arcangle=-30]{42}{41}
 \ncarc[arrows=<->,linewidth=.03,linestyle=dashed,linecolor=blue,arcangle=30]{42}{43}
 \ncarc[arrows=<->,linewidth=.03,linestyle=dashed,linecolor=blue,arcangle=-30]{12}{13}
 \ncarc[arrows=<->,linewidth=.03,linestyle=dashed,linecolor=blue,arcangle=-30]{13}{14}
 \ncarc[arrows=<->,linewidth=.03,linestyle=dashed,linecolor=blue,arcangle=30]{32}{31}
 \ncarc[arrows=<->,linewidth=.03,linestyle=dashed,linecolor=blue,arcangle=30]{31}{34}
 \ncarc[arrows=<->,linewidth=.03,linestyle=dashed,linecolor=blue,arcangle=30]{23}{24}
 \ncarc[arrows=<->,linewidth=.03,linestyle=dashed,linecolor=blue,arcangle=-30]{21}{24}
 \ncarc[arrows=<->,linewidth=.03,linestyle=dashed,linecolor=red,arcangle=-40]{42}{12}
 \ncarc[arrows=<->,linewidth=.03,linestyle=dashed,linecolor=red,arcangle=40]{42}{32}
 \ncarc[arrows=<->,linewidth=.03,linestyle=dashed,linecolor=red,arcangle=-40]{13}{43}
 \ncarc[arrows=<->,linewidth=.03,linestyle=dashed,linecolor=red,arcangle=40]{31}{41}
 \ncarc[arrows=<->,linewidth=.03,linestyle=dashed,linecolor=red,arcangle=40]{13}{23}
 \ncarc[arrows=<->,linewidth=.03,linestyle=dashed,linecolor=red,arcangle=-40]{31}{21}
 \ncarc[arrows=<->,linewidth=.03,linestyle=dashed,linecolor=red,arcangle=-40]{14}{24}
 \ncarc[arrows=<->,linewidth=.03,linestyle=dashed,linecolor=red,arcangle=40]{34}{24}
\end{psmatrix}
\quad
\begin{psmatrix}[colsep=.4cm,rowsep=.4cm]
  &&&& [name=42]{\mathbf{4|2}}\\
  &&& [name=43]{\mathbf{4|3}} && [name=41]{\mathbf{4|1}}\\
  && [name=12]{\mathbf{1|2}} &&&& [name=32]{\mathbf{3|2}} \\
  & [name=13]{\mathbf{1|3}} &&&&&& [name=31]{\mathbf{3|1}} \\
  && [name=14]{\mathbf{1|4}} &&&& [name=34]{\mathbf{3|4}} \\
  &&& [name=23]{\mathbf{2|3}} && [name=21]{\mathbf{2|1}} \\
  &&&& [name=24]{\mathbf{2|4}}
\ncarc[arrows=->,linewidth=.03,linestyle=dotted,arcangle=-30]{12}{13}
\ncarc[arrows=->,linewidth=.03,linestyle=dotted,arcangle=60]{12}{42}
\ncarc[arrows=->,linewidth=.03,linestyle=dotted,arcangle=-30]{12}{31}
\ncarc[arrows=->,linewidth=.03,linestyle=dotted,arcangle=30]{12}{24}
\ncarc[arrows=->,linewidth=.03,linestyle=dotted,arcangle=-30]{43}{13}
\ncarc[arrows=->,linewidth=.03,linestyle=dotted,arcangle=40]{43}{42}
\ncarc[arrows=->,linewidth=.03,linestyle=dotted,arcangle=-30]{43}{31}
\ncarc[arrows=->,linewidth=.03,linestyle=dotted,arcangle=30]{43}{24}
\ncarc[arrows=->,linewidth=.03,linestyle=dotted,arcangle=30]{41}{13}
\ncarc[arrows=->,linewidth=.03,linestyle=dotted,arcangle=-60]{41}{42}
\ncarc[arrows=->,linewidth=.03,linestyle=dotted,arcangle=30]{41}{31}
\ncarc[arrows=->,linewidth=.03,linestyle=dotted,arcangle=-30]{41}{24}
\ncarc[arrows=->,linewidth=.03,linestyle=dotted,arcangle=30]{32}{13}
\ncarc[arrows=->,linewidth=.03,linestyle=dotted,arcangle=-40]{32}{42}
\ncarc[arrows=->,linewidth=.03,linestyle=dotted,arcangle=30]{32}{31}
\ncarc[arrows=->,linewidth=.03,linestyle=dotted,arcangle=-30]{32}{24}
\ncarc[arrows=->,linewidth=.03,linestyle=dotted,arcangle=-30]{34}{13}
\ncarc[arrows=->,linewidth=.03,linestyle=dotted,arcangle=60]{34}{42}
\ncarc[arrows=->,linewidth=.03,linestyle=dotted,arcangle=-30]{34}{31}
\ncarc[arrows=->,linewidth=.03,linestyle=dotted,arcangle=30]{34}{24}
\ncarc[arrows=->,linewidth=.03,linestyle=dotted,arcangle=-30]{21}{13}
\ncarc[arrows=->,linewidth=.03,linestyle=dotted,arcangle=40]{21}{42}
\ncarc[arrows=->,linewidth=.03,linestyle=dotted,arcangle=-30]{21}{31}
\ncarc[arrows=->,linewidth=.03,linestyle=dotted,arcangle=30]{21}{24}
\ncarc[arrows=->,linewidth=.03,linestyle=dotted,arcangle=30]{14}{13}
\ncarc[arrows=->,linewidth=.03,linestyle=dotted,arcangle=-60]{14}{42}
\ncarc[arrows=->,linewidth=.03,linestyle=dotted,arcangle=30]{14}{31}
\ncarc[arrows=->,linewidth=.03,linestyle=dotted,arcangle=-30]{14}{24}
\ncarc[arrows=->,linewidth=.03,linestyle=dotted,arcangle=30]{23}{13}
\ncarc[arrows=->,linewidth=.03,linestyle=dotted,arcangle=-40]{23}{42}
\ncarc[arrows=->,linewidth=.03,linestyle=dotted,arcangle=30]{23}{31}
\ncarc[arrows=->,linewidth=.03,linestyle=dotted,arcangle=-30]{23}{24}
\end{psmatrix}
\)
\end{scriptsize}
\caption{Conversion and preference for the square game}
\label{fig:conv_pref_square}
\end{figure}

\begin{figure}[ht]
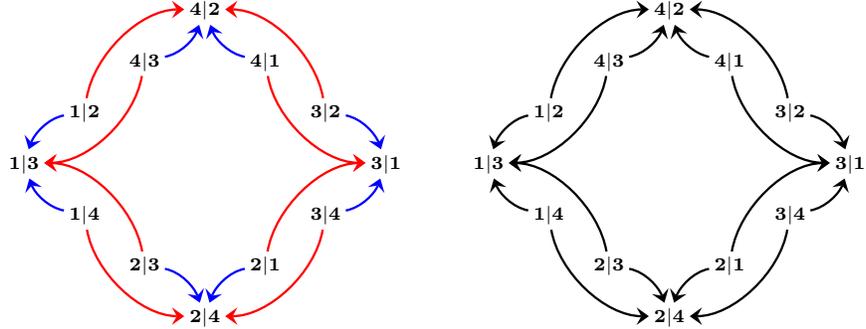

  \begin{scriptsize}
\[ 
\begin{psmatrix}[colsep=.4cm,rowsep=.4cm]
  &&&& [name=42]{\mathbf{4|2}}\\
  &&& [name=43]{\mathbf{4|3}} && [name=41]{\mathbf{4|1}}\\
  && [name=12]{\mathbf{1|2}} &&&& [name=32]{\mathbf{3|2}} \\
  & [name=13]{\mathbf{1|3}} &&&&&& [name=31]{\mathbf{3|1}} \\
  && [name=14]{\mathbf{1|4}} &&&& [name=34]{\mathbf{3|4}} \\
  &&& [name=23]{\mathbf{2|3}} && [name=21]{\mathbf{2|1}} \\
  &&&& [name=24]{\mathbf{2|4}}
  \ncarc[arrows=->,linewidth=.03,linecolor=blue,arcangle=-30]{12}{13}
 \ncarc[arrows=->,linewidth=.03,linecolor=blue,arcangle=30]{14}{13}
 \ncarc[arrows=->,linewidth=.03,linecolor=blue,arcangle=-30]{21}{24}
 \ncarc[arrows=->,linewidth=.03,linecolor=blue,arcangle=30]{23}{24}
 \ncarc[arrows=->,linewidth=.03,linecolor=blue,arcangle=30]{32}{31}
 \ncarc[arrows=->,linewidth=.03,linecolor=blue,arcangle=-30]{34}{31}
 \ncarc[arrows=->,linewidth=.03,linecolor=blue,arcangle=-30]{43}{42}
 \ncarc[arrows=->,linewidth=.03,linecolor=red,arcangle=40]{12}{42}
 \ncarc[arrows=->,linewidth=.03,linecolor=red,arcangle=-40]{14}{24}
 \ncarc[arrows=->,linewidth=.03,linecolor=red,arcangle=40]{21}{31}
 \ncarc[arrows=->,linewidth=.03,linecolor=red,arcangle=40]{43}{13}
 \ncarc[arrows=->,linewidth=.03,linecolor=red,arcangle=-40]{23}{13}
 \ncarc[arrows=->,linewidth=.03,linecolor=red,arcangle=-40]{32}{42}
 \ncarc[arrows=->,linewidth=.03,linecolor=red,arcangle=40]{34}{24}
 \ncarc[arrows=->,linewidth=.03,linecolor=red,arcangle=-40]{41}{31}
\ncarc[arrows=->,linewidth=.03,linecolor=blue,arcangle=30]{41}{42}
\end{psmatrix}
\qquad
\begin{psmatrix}[colsep=.4cm,rowsep=.4cm]
  &&&& [name=42]{\mathbf{4|2}}\\
  &&& [name=43]{\mathbf{4|3}} && [name=41]{\mathbf{4|1}}\\
  && [name=12]{\mathbf{1|2}} &&&& [name=32]{\mathbf{3|2}} \\
  & [name=13]{\mathbf{1|3}} &&&&&& [name=31]{\mathbf{3|1}} \\
  && [name=14]{\mathbf{1|4}} &&&& [name=34]{\mathbf{3|4}} \\
  &&& [name=23]{\mathbf{2|3}} && [name=21]{\mathbf{2|1}} \\
  &&&& [name=24]{\mathbf{2|4}}
  \ncarc[arrows=->,linewidth=.03,arcangle=-30]{12}{13}
 \ncarc[arrows=->,linewidth=.03,arcangle=30]{14}{13}
 \ncarc[arrows=->,linewidth=.03,arcangle=-30]{21}{24}
 \ncarc[arrows=->,linewidth=.03,arcangle=30]{23}{24}
 \ncarc[arrows=->,linewidth=.03,arcangle=30]{32}{31}
 \ncarc[arrows=->,linewidth=.03,arcangle=-30]{34}{31}
 \ncarc[arrows=->,linewidth=.03,arcangle=-30]{43}{42}
 \ncarc[arrows=->,linewidth=.03,arcangle=40]{12}{42}
 \ncarc[arrows=->,linewidth=.03,arcangle=-40]{14}{24}
 \ncarc[arrows=->,linewidth=.03,arcangle=40]{21}{31}
 \ncarc[arrows=->,linewidth=.03,arcangle=40]{43}{13}
 \ncarc[arrows=->,linewidth=.03,arcangle=-40]{23}{13}
 \ncarc[arrows=->,linewidth=.03,arcangle=-40]{32}{42}
 \ncarc[arrows=->,linewidth=.03,arcangle=40]{34}{24}
 \ncarc[arrows=->,linewidth=.03,arcangle=-40]{41}{31}
\ncarc[arrows=->,linewidth=.03,arcangle=30]{41}{42}
\end{psmatrix}
\]
\end{scriptsize}
  \caption{Agent changes of mind (on the left) and (general) change of
mind (on the right) for the square game}
  \label{fig:chg_of_mind_square}
\end{figure}

The game has 12 situations, which we write $i|j$ for
${1\le i,j\le 4}$ and $i\not=j$.  The above pictured situation
corresponds to $1|2$.  The two conversions are described by
Figure~\ref{fig:conv_pref_square} left.  In this figure,
$\rouge{\dashrightarrow}$ is \Alice's conversion and $\bl{\dashrightarrow}$
is \Beth's conversion.

In this game, both players share the same preference, namely the
following: since a player does not want her token on a position next to the other token, she
prefers a situation where her token is on the opposite corner of the
other token.
This gives the preference given in Figure~\ref{fig:conv_pref_square}
right.  The arrow from $1|2$ to $1|3$ means players prefer $1|3$ to
$1|2$.

From the conversion and the preference we build a relation that we
call \emph{change of mind}.  \Alice{} can change her mind from a
situation $s$ to a new one $s'$, if she can convert $s$ into the new
situation $s'$ and rather than $s$ she chooses~$s'$.  Changes of mind
for \Alice{} and \Beth{} are given in
Figure~\ref{fig:chg_of_mind_square} on the left.  In this figure,
$\rouge{\xymatrix{\ar@{->}[r]&}}$ is \Alice's change of mind and
$\bl{\xymatrix{\ar@{->}[r]&}}$ is \Beth's conversion.  The
\emph{(general) change of mind} is the union of the \emph{agent change
  of mind}, it is given by Figure~\ref{fig:chg_of_mind_square} on the
right.  The equilibria are the end points (or ``minimal point'') for
that relation, namely $1|3$, $4|2$, $3|1$ and $2|4$.  This means that
no change of mind arrows leave those nodes.  In these situations
players have their tokens on opposite corners and they do not move.
An equilibrium like $1|3$ which is an end point is called an
\emph{Abstract Nash Equilibrium}.

\gametwo

\subsubsection{A second version}

We propose a second version of the game, where moves of the token can
only be made clockwise.  This implies to change the conversion
changes, but also the preference, as a player does want not to be
threatened by another token placed before hers clockwise and prefers a
situation that places this token as far as possible.  The conversions,
the preferences and the changes of mind are given in
Figure~\ref{fig:conv_pref_square2}
(page~\pageref{fig:conv_pref_square2}).  If one looks at the
equilibrium, one sees that there is no fixed position where players
are happy.  To be happy the players have to move around for ever, one
chasing the other.  It is not really a cycle, but a perpetual move.
We also call that an equilibrium.  It is sometimes called a
\emph{dynamic equilibrium} or a \emph{stationary state}.

\subsubsection{A third version}

The third version is meant to present an interesting feature of the
change of mind.  In this version, we use the same rules as the second
one, except that we suppose that the game does not start with both
token on the board.  Actually it starts as follows.  \Alice{} has put
her token on node $1$ (this game positions is described as $1|`w$).
Then \Beth{} chooses a position among $2$, $3$ or~$4$.  The conversion
is given in Figure~\ref{fig:conv_CoM_square3} left.  \Beth{} may
choose not to play, but in this case she loses, in other words, she
prefers any position to $1|`w$.  We do not draw the preference relation, as it would make for an entangled picture. The change of mind is given on
Figure~\ref{fig:conv_CoM_square3} right
(page~\pageref{fig:conv_CoM_square3}).  There is again a dynamic
equilibrium and one sees that this dynamic equilibrium is not the
whole game, indeed one enters the perpetual move after at least one
step in the game.  

\begin{figure}[bt]
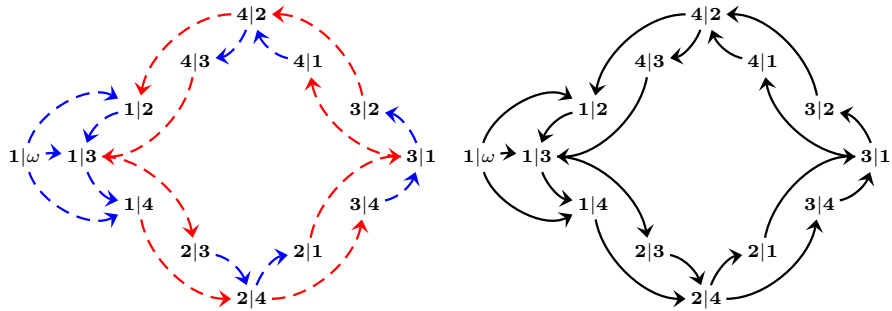

  \begin{scriptsize}
\noindent\(
\begin{psmatrix}[colsep=.35cm,rowsep=.35cm]
  &&&&& [name=42]{\mathbf{4|2}}\\
  &&&& [name=43]{\mathbf{4|3}} && [name=41]{\mathbf{4|1}}\\
  &&& [name=12]{\mathbf{1|2}} &&&& [name=32]{\mathbf{3|2}} \\
  &[name=1w]{\mathbf{1|\omega}}& [name=13]{\mathbf{1|3}} &&&&&& [name=31]{\mathbf{3|1}} \\
  &&& [name=14]{\mathbf{1|4}} &&&& [name=34]{\mathbf{3|4}} \\
  &&&& [name=23]{\mathbf{2|3}} && [name=21]{\mathbf{2|1}} \\
  &&&&& [name=24]{\mathbf{2|4}}
 \ncarc[arrows=->,linewidth=.03,linestyle=dashed,linecolor=blue,arcangle=50]{1w}{12}
 \ncarc[arrows=->,linewidth=.03,linestyle=dashed,linecolor=blue]{1w}{13}
 \ncarc[arrows=->,linewidth=.03,linestyle=dashed,linecolor=blue,arcangle=-50]{1w}{14}
 \ncarc[arrows=->,linewidth=.03,linestyle=dashed,linecolor=blue,arcangle=30]{41}{42}
 \ncarc[arrows=->,linewidth=.03,linestyle=dashed,linecolor=blue,arcangle=30]{42}{43}
 \ncarc[arrows=->,linewidth=.03,linestyle=dashed,linecolor=blue,arcangle=-30]{12}{13}
 \ncarc[arrows=->,linewidth=.03,linestyle=dashed,linecolor=blue,arcangle=-30]{13}{14}
 \ncarc[arrows=->,linewidth=.03,linestyle=dashed,linecolor=blue,arcangle=-30]{31}{32}
 \ncarc[arrows=->,linewidth=.03,linestyle=dashed,linecolor=blue,arcangle=-30]{34}{31}
 \ncarc[arrows=->,linewidth=.03,linestyle=dashed,linecolor=blue,arcangle=30]{23}{24}
 \ncarc[arrows=->,linewidth=.03,linestyle=dashed,linecolor=blue,arcangle=30]{24}{21}
 \ncarc[arrows=->,linewidth=.03,linestyle=dashed,linecolor=red,arcangle=-40]{42}{12}
 \ncarc[arrows=->,linewidth=.03,linestyle=dashed,linecolor=red,arcangle=-40]{32}{42}
 \ncarc[arrows=->,linewidth=.03,linestyle=dashed,linecolor=red,arcangle=40]{43}{13}
 \ncarc[arrows=->,linewidth=.03,linestyle=dashed,linecolor=red,arcangle=40]{31}{41}
 \ncarc[arrows=->,linewidth=.03,linestyle=dashed,linecolor=red,arcangle=40]{13}{23}
 \ncarc[arrows=->,linewidth=.03,linestyle=dashed,linecolor=red,arcangle=40]{21}{31}
 \ncarc[arrows=->,linewidth=.03,linestyle=dashed,linecolor=red,arcangle=-40]{14}{24}
 \ncarc[arrows=->,linewidth=.03,linestyle=dashed,linecolor=red,arcangle=-40]{24}{34}
\end{psmatrix}
\begin{psmatrix}[colsep=.35cm,rowsep=.35cm]
  &&&&& [name=42]{\mathbf{4|2}}\\
  &&&& [name=43]{\mathbf{4|3}} && [name=41]{\mathbf{4|1}}\\
  &&& [name=12]{\mathbf{1|2}} &&&& [name=32]{\mathbf{3|2}} \\
  &[name=1w]{\mathbf{1|\omega}}& [name=13]{\mathbf{1|3}} &&&&&& [name=31]{\mathbf{3|1}} \\
  &&& [name=14]{\mathbf{1|4}} &&&& [name=34]{\mathbf{3|4}} \\
  &&&& [name=23]{\mathbf{2|3}} && [name=21]{\mathbf{2|1}} \\
  &&&&& [name=24]{\mathbf{2|4}}
 \ncarc[arrows=->,linewidth=.03,arcangle=50]{1w}{12}
 \ncarc[arrows=->,linewidth=.03]{1w}{13}
 \ncarc[arrows=->,linewidth=.03,arcangle=-50]{1w}{14}
 \ncarc[arrows=->,linewidth=.03,arcangle=30]{41}{42}
 \ncarc[arrows=->,linewidth=.03,arcangle=30]{42}{43}
 \ncarc[arrows=->,linewidth=.03,arcangle=-30]{12}{13}
 \ncarc[arrows=->,linewidth=.03,arcangle=-30]{13}{14}
 \ncarc[arrows=->,linewidth=.03,arcangle=-30]{31}{32}
 \ncarc[arrows=->,linewidth=.03,arcangle=-30]{34}{31}
 \ncarc[arrows=->,linewidth=.03,arcangle=30]{23}{24}
 \ncarc[arrows=->,linewidth=.03,arcangle=30]{24}{21}
 \ncarc[arrows=->,linewidth=.03,arcangle=-40]{42}{12}
 \ncarc[arrows=->,linewidth=.03,arcangle=-40]{32}{42}
 \ncarc[arrows=->,linewidth=.03,arcangle=40]{43}{13}
 \ncarc[arrows=->,linewidth=.03,arcangle=40]{31}{41}
 \ncarc[arrows=->,linewidth=.03,arcangle=40]{13}{23}
 \ncarc[arrows=->,linewidth=.03,arcangle=40]{21}{31}
 \ncarc[arrows=->,linewidth=.03,arcangle=-40]{14}{24}
 \ncarc[arrows=->,linewidth=.03,arcangle=-40]{24}{34}
\end{psmatrix}
\)
\end{scriptsize}
\caption{Conversion and change of mind for the third version of the square
game}
\label{fig:conv_CoM_square3}
\end{figure}

\subsection{Strategic games}
\label{sec:strat_games}

In this presentation of strategic games we do not use payoff
functions, but directly a preference relation (See Section
1.1.2 of~\cite{osborne04a} for a discussion) and we present several
games.

\subsubsection{The Prisoner's Dilemma} The problem is stated usually
as follows

\begin{quote} Two suspects, A and B, are arrested by the police. The
police have insufficient evidence for a conviction, and, having
separated both prisoners, visit each of them to offer the same deal:
if one acts as an informer against the other (\emph{finks}) and the
other remains \emph{quiet}, the betrayer goes free and the quiet
accomplice receives the full sentence. If both stay quiet, the police
can sentence both prisoners to a reduced sentence in jail for a minor
charge. If each finks, each will receive a similar intermediate
sentence.  Each prisoner must make the choice of whether to fink or to
remain quiet. However, neither prisoner knows for sure what choice the
other prisoner will make. So the question this dilemma poses is: What
will happen? How will the prisoners act?
\end{quote}

Each prisoner can be into two states, either \emph{fink} ($F$) or be
\emph{quiet} ($Q$).  Each prisoner can go from $Q$ to $F$ and
vice-versa, hence the following conversion, where
$\ConvA$ is prisoner A conversion and
$\ConvB$ is prisoner B conversion
(Figure~\ref{fig:prisoner_conv_pref} left).  Each prisoner prefers to
go free over being sentenced and prefers a light sentence to a full
sentence.  Hence the preference are given in
Figure~\ref{fig:prisoner_conv_pref} right, where
$\prefAlice$ is prisoner A preference and
$\prefBeth$ is prisoner B preference.
\begin{figure}[ht]
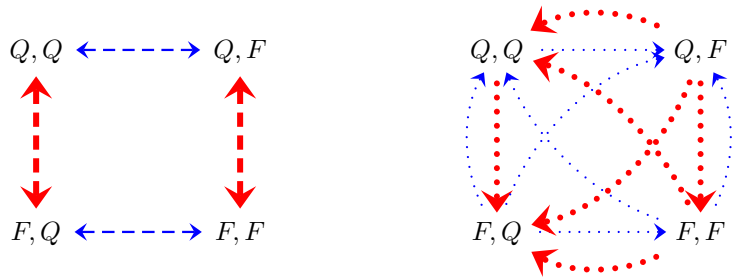

  \[
  \begin{psmatrix}[colsep=2cm,rowsep=2cm]
    &[name=QQ]{Q,Q} & [name=QF]{Q,F}\\
    &[name=FQ]{F,Q} & [name=FF]{F,F} \psset{nodesep=5pt}
    \ncline[arrows=<->,linewidth=.03,linecolor=blue,linestyle=dashed]{QQ}{QF}
    \ncline[arrows=<->,linewidth=.03,linecolor=blue,linestyle=dashed]{FQ}{FF}
    \ncline[arrows=<->,linewidth=.08,linecolor=red,linestyle=dashed]{QQ}{FQ}
    \ncline[arrows=<->,linewidth=.08,linecolor=red,linestyle=dashed]{QF}{FF}
  \end{psmatrix}
  \qquad
  \begin{psmatrix}[colsep=2cm,rowsep=2cm]
    &[name=QQ]{Q,Q} & [name=QF]{Q,F}\\
    &[name=FQ]{F,Q} & [name=FF]{F,F} \psset{nodesep=3.5pt}
    \ncline[arrows=->,linewidth=.03,linecolor=blue,linestyle=dotted]{QQ}{QF}
    \ncarc[arrows=->,linewidth=.08,linecolor=red,linestyle=dotted,
    arcangle=-30]{QF}{QQ}
    \ncline[arrows=->,linewidth=.03,linecolor=blue,linestyle=dotted]{FQ}{FF}
    \ncarc[arrows=->,linewidth=.08,linecolor=red,linestyle=dotted,
    arcangle=30]{FF}{FQ}
    \ncline[arrows=->,linewidth=.08,linecolor=red,linestyle=dotted]{QQ}{FQ}
    \ncarc[arrows=->,linewidth=.03,linecolor=blue,linestyle=dotted,
    arcangle=30]{FQ}{QQ}
    \ncline[arrows=->,linewidth=.08,linecolor=red,linestyle=dotted]{QF}{FF}
    \ncarc[arrows=->,linewidth=.03,linecolor=blue,linestyle=dotted,
    arcangle=-30]{FF}{QF}
    \ncarc[arrows=->,linewidth=.08,linecolor=red,linestyle=dotted,
    arcangle=-25]{FF}{QQ}
    \ncarc[arrows=->,linewidth=.03,linecolor=blue,linestyle=dotted,
    arcangle=25]{FF}{QQ}
    \ncarc[arrows=->,linewidth=.08,linecolor=red,linestyle=dotted,
    arcangle=30]{QF}{FQ}
    \ncarc[arrows=->,linewidth=.03,linecolor=blue,linestyle=dotted,
    arcangle=30]{FQ}{QF}
  \end{psmatrix}
  \qquad
  \]

  \bigskip

  \caption{Conversion and preference in the prisoner's dilemma}
  \label{fig:prisoner_conv_pref}
\end{figure}

From this we get the change of mind of Figure~\ref{fig:prisoner_CoM}.
One sees clearly that the only equilibrium is $F,F$ despite both
prefer $Q,Q$ as shown on Figure~\ref{fig:prisoner_conv_pref} right.
\begin{figure}[ht]
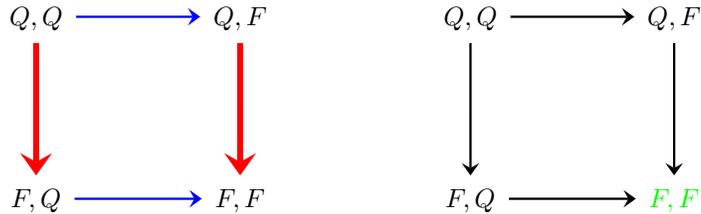
 \medskip
\[ 
\begin{psmatrix}[colsep=2cm,rowsep=2cm]
&[name=QQ]{Q,Q} & [name=QF]{Q,F}\\
&[name=FQ]{F,Q} & [name=FF]{F,F}
\psset{nodesep=5pt}
\ncline[arrows=->,linewidth=.03,linecolor=blue]{QQ}{QF}
\ncline[arrows=->,linewidth=.03,linecolor=blue]{FQ}{FF}
\ncline[arrows=->,linewidth=.08,linecolor=red]{QQ}{FQ}
\ncline[arrows=->,linewidth=.08,linecolor=red]{QF}{FF}
\end{psmatrix}
\quad
\begin{psmatrix}[colsep=2cm,rowsep=2cm]
&[name=QQ]{Q,Q} & [name=QF]{Q,F}\\
&[name=FQ]{F,Q} & [name=FF]{\color{green}F,F}
\psset{nodesep=5pt}
\ncline[arrows=->,linewidth=.03]{QQ}{QF}
\ncline[arrows=->,linewidth=.03]{FQ}{FF}
\ncline[arrows=->,linewidth=.03]{QQ}{FQ}
\ncline[arrows=->,linewidth=.03]{QF}{FF}
\end{psmatrix}
\]
\caption{Agent and (general) change of mind in the prisoner's dilemma}
\label{fig:prisoner_CoM} 
\end{figure}

Such an equilibrium is called a Nash equilibrium in strategic game
theory.

The paradox comes from the fact that $F,F$ is an equilibrium despite
the fact one has:
\begin{scriptsize}
  $
  \begin{psmatrix}[colsep=1.5cm,rowsep=1.5cm] %
    [name=QQ]{Q,Q} &
    [name=FF]{F,F} 
    \psset{nodesep=3.5pt}
    \ncarc[arrows=->,linewidth=.06,linecolor=red,linestyle=dotted, arcangle=-15]{FF}{QQ}
    \ncarc[arrows=->,linewidth=.025,linecolor=blue,linestyle=dotted, arcangle=15]{FF}{QQ}
  \end{psmatrix}
  $
\end{scriptsize}
in the preference.

\subsubsection{Matching Pennies}

This second example is also classic.  This is a simple example of
strategic game where there is no singleton equilibrium.  As an
equilibrium can contain more than one situation, we call singleton
equilibrium a CP equilibrium which contains only one situation.  This
boils down to the kind of equilibrium we have introduced so far.

\begin{quotation}
  \noindent The game is played between two players, Player A and
Player~B. Each player has a penny and must secretly turn the penny to
heads ($H$) or tails ($T$). The players then reveal their choices
simultaneously. If the pennies match (both heads or both tails),
Player A wins. If the pennies do not match (one heads and one tails),
Player B wins.
\end{quotation}

The conversion is similar to this of the prisoner's dilemma
(Figure~\ref{fig:match_pen_conv_pref_CoM} left) and the preference is
given by who wins (Figure~\ref{fig:match_pen_conv_pref_CoM} center).
\begin{figure}[ht]
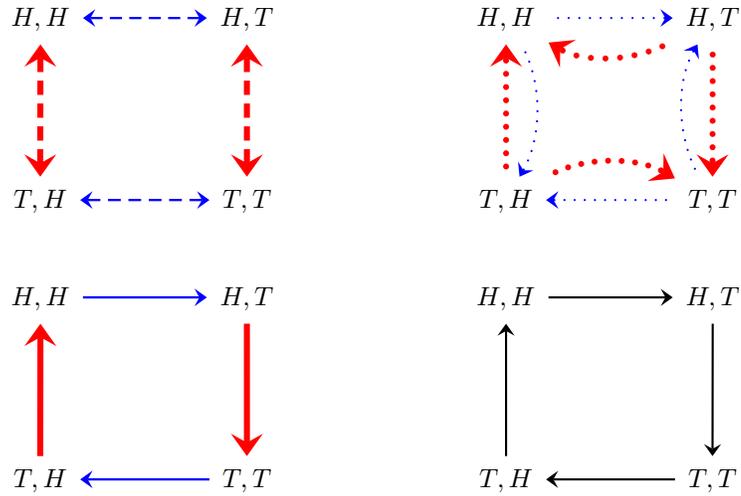
 
\[
\begin{psmatrix}[colsep=2cm,rowsep=2cm]
  & [name=HH]{H,H} & [name=HT]{H,T}\\
  & [name=TH]{T,H} & [name=TT]{T,T}
  \psset{nodesep=5pt}
  \ncline[arrows=<->,linewidth=.03,linecolor=blue,linestyle=dashed]{HH}{HT}
  \ncline[arrows=<->,linewidth=.03,linecolor=blue,linestyle=dashed]{TH}{TT}
  \ncline[arrows=<->,linewidth=.08,linecolor=red,linestyle=dashed]{HH}{TH}
  \ncline[arrows=<->,linewidth=.08,linecolor=red,linestyle=dashed]{HT}{TT}
\end{psmatrix}
\qquad
\begin{psmatrix}[colsep=2cm,rowsep=2cm]
  & [name=HH]{H,H} & [name=HT]{H,T}\\
  & [name=TH]{T,H} & [name=TT]{T,T}
  \psset{nodesep=5pt}
  \ncline[arrows=->,linewidth=.03,linecolor=blue,linestyle=dotted]{HH}{HT}
  \ncline[arrows=->,linewidth=.03,linecolor=blue,linestyle=dotted]{TT}{TH}
  \ncline[arrows=->,linewidth=.08,linecolor=red,linestyle=dotted]{TH}{HH}
  \ncline[arrows=->,linewidth=.08,linecolor=red,linestyle=dotted]{HT}{TT}
  \ncarc[arrows=->,linewidth=.08,linecolor=red,linestyle=dotted,arcangle=30]{HT}{HH}
  \ncarc[arrows=->,linewidth=.03,linecolor=blue,linestyle=dotted,arcangle=30]{HH}{TH}
  \ncarc[arrows=->,linewidth=.08,linecolor=red,linestyle=dotted,arcangle=30]{TH}{TT}
  \ncarc[arrows=->,linewidth=.03,linecolor=blue,linestyle=dotted,arcangle=30]{TT}{HT}
\end{psmatrix}
\]

\[
\begin{psmatrix}[colsep=2cm,rowsep=2cm]
  & [name=HH]{H,H} & [name=HT]{H,T}\\
  & [name=TH]{T,H} & [name=TT]{T,T}
  \psset{nodesep=5pt}
  \ncline[arrows=->,linewidth=.03,linecolor=blue]{HH}{HT}
  \ncline[arrows=->,linewidth=.03,linecolor=blue]{TT}{TH}
  \ncline[arrows=->,linewidth=.08,linecolor=red]{TH}{HH}
  \ncline[arrows=->,linewidth=.08,linecolor=red]{HT}{TT}
\end{psmatrix}
\qquad
\begin{psmatrix}[colsep=2cm,rowsep=2cm]
  & [name=HH]{H,H} & [name=HT]{H,T}\\
  & [name=TH]{T,H} & [name=TT]{T,T}
  \psset{nodesep=5pt}
  \ncline[arrows=->]{HH}{HT}
  \ncline[arrows=->]{TT}{TH}
  \ncline[arrows=->]{TH}{HH}
  \ncline[arrows=->]{HT}{TT}
\end{psmatrix}
\]
\caption{Conversion, preference and (general) change of mind in
Matching Pennies}
\label{fig:match_pen_conv_pref_CoM}
\end{figure}

Change of mind for matching pennies is in
Figure~\ref{fig:match_pen_conv_pref_CoM} right.  One notices that
there is a cycle.  This cycle is the equilibrium.  No player has clear
mind of what to play and changes her minds each time she loses.

\subsubsection{Scissors, Paper, Stone}
\label{sec:sc_pap_st}

Here we present the famous game known as \emph{scissors, paper,
stone}.  It involves two players, \Alice{} and \Beth{} who announce
either \emph{scissors} ($C$) or \emph{paper} ($P$) or \emph{stone}
($T$) with the rules that \emph{stone beats scissors, scissors beat
paper, and paper beats stone}.  There are nine situations (see below),
one sees that \Alice{} may convert her situation $C,P$ into $P,P$ or
$T,P$ and the same for the other situations.  The conversion is given
below left.  Since the rules, it seems clear that \Alice{} prefers
$P,P$ to $T,P$ and $C,P$ to $P,P$, hence the preference given below
right with $\prefAlice$ is \Alice{}'s preference and
$\prefBeth$ is \Beth{}'s preference.  To avoid a
cumbersome diagram, in the preference we do not put the arrows deduced
by transitivity.

\[
\begin{psmatrix}[colsep=1.2cm,rowsep=1.2cm]
  &[name=CC]{C,C} & [name=CP]{C,P} &  [name=CT]{C,T} \\
  &[name=PC]{P,C} & [name=PP]{P,P} &  [name=PT]{P,T} \\
  &[name=TC]{T,C} & [name=TP]{T,P} &  [name=TT]{T,T}
\psset{nodesep=5pt} 
\ncline[arrows=<->,linewidth=.03,linecolor=blue,linestyle=dashed]{CC}{CP}
\ncline[arrows=<->,linewidth=.03,linecolor=blue,linestyle=dashed]{CP}{CT}
\ncarc[arrows=<->,linewidth=.03,linecolor=blue,linestyle=dashed,arcangle=30]{CC}{CT}
\ncline[arrows=<->,linewidth=.03,linecolor=blue,linestyle=dashed]{PC}{PP}
\ncline[arrows=<->,linewidth=.03,linecolor=blue,linestyle=dashed]{PP}{PT}
\ncarc[arrows=<->,linewidth=.03,linecolor=blue,linestyle=dashed,arcangle=30]{PC}{PT}
\ncline[arrows=<->,linewidth=.03,linecolor=blue,linestyle=dashed]{TC}{TP}
\ncline[arrows=<->,linewidth=.03,linecolor=blue,linestyle=dashed]{TP}{TT}
\ncarc[arrows=<->,linewidth=.03,linecolor=blue,linestyle=dashed,arcangle=-30]{TC}{TT}
\psset{arrowscale=.6}
\ncline[arrows=<->,linewidth=.08,linecolor=red,linestyle=dashed]{CC}{PC}
\ncline[arrows=<->,linewidth=.08,linecolor=red,linestyle=dashed]{PC}{TC}
\ncarc[arrows=<->,linewidth=.08,linecolor=red,linestyle=dashed,arcangle=-30]{CC}{TC}
\ncline[arrows=<->,linewidth=.08,linecolor=red,linestyle=dashed]{CP}{PP}
\ncline[arrows=<->,linewidth=.08,linecolor=red,linestyle=dashed]{PP}{TP}
\ncarc[arrows=<->,linewidth=.08,linecolor=red,linestyle=dashed,arcangle=-30]{CP}{TP}
\ncline[arrows=<->,linewidth=.08,linecolor=red,linestyle=dashed]{CT}{PT}
\ncline[arrows=<->,linewidth=.08,linecolor=red,linestyle=dashed]{PT}{TT}
\ncarc[arrows=<->,linewidth=.08,linecolor=red,linestyle=dashed,arcangle=30]{CT}{TT}
\end{psmatrix}
\qquad
\begin{psmatrix}[colsep=1.2cm,rowsep=1.2cm]
  &[name=CC]{C,C} & [name=CP]{C,P} &  [name=CT]{C,T} \\
  &[name=PC]{P,C} & [name=PP]{P,P} &  [name=PT]{P,T} \\
  &[name=TC]{T,C} & [name=TP]{T,P} &  [name=TT]{T,T}
\psset{nodesep=5pt,arrowinset=.8}
\ncarc[arrows=->,linewidth=.03,linecolor=blue,linestyle=dotted,arcangle=50]{CC}{CT}
\ncarc[arrows=->,linewidth=.08,arrowscale=.6,linecolor=red,linestyle=dotted,arcangle=-30]{CT}{CC}
\ncarc[arrows=->,linewidth=.03,linecolor=blue,linestyle=dotted,arcangle=-20]{CP}{CC}
\ncarc[arrows=->,linewidth=.08,arrowscale=.6,linecolor=red,linestyle=dotted,arcangle=-20]{CC}{CP}
\ncarc[arrows=->,linewidth=.03,linecolor=blue,linestyle=dotted,arcangle=20]{PP}{PC}
\ncarc[arrows=->,linewidth=.08,arrowscale=.6,linecolor=red,linestyle=dotted,arcangle=20]{PC}{PP}
\ncarc[arrows=->,linewidth=.03,linecolor=blue,linestyle=dotted,arcangle=-20]{PT}{PP}
\ncarc[arrows=->,linewidth=.08,arrowscale=.6,linecolor=red,linestyle=dotted,arcangle=-20]{PP}{PT}
\ncarc[arrows=->,linewidth=.03,linecolor=blue,linestyle=dotted,arcangle=-50]{TC}{TT}
\ncarc[arrows=->,linewidth=.08,arrowscale=.6,linecolor=red,linestyle=dotted,arcangle=30]{TT}{TC}
\ncarc[arrows=->,linewidth=.03,linecolor=blue,linestyle=dotted,arcangle=20]{TT}{TP}
\ncarc[arrows=->,linewidth=.08,arrowscale=.6,linecolor=red,linestyle=dotted,arcangle=20]{TP}{TT}
\ncarc[arrows=->,linewidth=.03,linecolor=blue,linestyle=dotted,arcangle=50]{TC}{CC}
\ncarc[arrows=->,linewidth=.08,arrowscale=.6,linecolor=red,linestyle=dotted,arcangle=-30]{CC}{TC}
\ncarc[arrows=->,linewidth=.03,linecolor=blue,linestyle=dotted,arcangle=-20]{CC}{PC}
\ncarc[arrows=->,linewidth=.08,arrowscale=.6,linecolor=red,linestyle=dotted,arcangle=-20]{PC}{CC}
\ncarc[arrows=->,linewidth=.03,linecolor=blue,linestyle=dotted,arcangle=20]{CP}{PP}
\ncarc[arrows=->,linewidth=.08,arrowscale=.6,linecolor=red,linestyle=dotted,arcangle=20]{PP}{CP}
\ncarc[arrows=->,linewidth=.03,linecolor=blue,linestyle=dotted,arcangle=-20]{PP}{TP}
\ncarc[arrows=->,linewidth=.08,arrowscale=.6,linecolor=red,linestyle=dotted,arcangle=-20]{TP}{PP}
\ncarc[arrows=->,linewidth=.03,linecolor=blue,linestyle=dotted,arcangle=-50]{TT}{CT}
\ncarc[arrows=->,linewidth=.08,arrowscale=.6,linecolor=red,linestyle=dotted,arcangle=30]{CT}{TT}
\ncarc[arrows=->,linewidth=.03,linecolor=blue,linestyle=dotted,arcangle=20]{PT}{TT}
\ncarc[arrows=->,linewidth=.08,arrowscale=.6,linecolor=red,linestyle=dotted,arcangle=20]{TT}{PT}
\end{psmatrix}
\]

\bigskip

From the above conversion and preference, one gets the following
change of mind.
\bigskip
\[ 
\begin{psmatrix}[colsep=1.2cm,rowsep=1.2cm]
  &[name=CC]{C,C} & [name=CP]{C,P} &  [name=CT]{C,T} \\
  &[name=PC]{P,C} & [name=PP]{P,P} &  [name=PT]{P,T} \\
  &[name=TC]{T,C} & [name=TP]{T,P} &  [name=TT]{T,T}
\psset{nodesep=5pt,arrowinset=.8}
\end{psmatrix}
\ncline[arrows=->]{CP}{CC}
\ncline[arrows=->]{CP}{CT}
\ncarc[arrows=->,arcangle=30]{CC}{CT}
\ncline[arrows=->]{PT}{PP}
\ncline[arrows=->]{PP}{PC}
\ncarc[arrows=->,arcangle=-30]{PT}{PC}
\ncline[arrows=->]{TC}{TP}
\ncline[arrows=->]{TT}{TP}
\ncarc[arrows=->,arcangle=-30]{TC}{TT}
\ncline[arrows=->]{PC}{CC}
\ncline[arrows=->]{PC}{TC}
\ncarc[arrows=->,arcangle=-30]{CC}{TC}
\ncline[arrows=->]{TP}{PP}
\ncline[arrows=->]{PP}{CP}
\ncarc[arrows=->,arcangle=30]{TP}{CP}
\ncline[arrows=->]{CT}{PT}
\ncline[arrows=->]{TT}{PT}
\ncarc[arrows=->,arcangle=30]{CT}{TT}
\]

\bigskip

One sees also perpetual moves as in the matching pennies of which it
is a generalization.

\subsubsection{Strategic games as CP games}

A strategic game is a specific kind of CP games.  To be a strategic
game, a CP game has to fulfill the following conditions.

\begin{enumerate}
\item Each situation is a $n$-Cartesian product, where $n$ is the
number of players.  The constituents of the Cartesian product are
called \emph{strategies}.

\item Conversion for player $a$, written $\conv{a}$, is any change
along the $a$-th dimension, i.e., $(s_1,...,s_a,...,s_n) \conv{a}
(s_1,...,s_a',...,s_n)$.  Hence in
strategic games, conversion is an equivalent relation, namely
  \begin{itemize}
  \item symmetric, ($s \conv{a} s'$ implies $s' \conv{a} s$),
  \item transitive, ($s \conv{a} s'$ and $s' \conv{a} s''$ imply $s
\conv{a} s''$),
  \item and reflexive ($s \conv{a} s$).
  \end{itemize}

\end{enumerate}

\subsection{Blink and you lose}
\label{sec:blink} \emph{Blink and you lose} is a game played on a
simple graph with two undifferentiated tokens.  There are three
positions:
\[ \xymatrix { *++[o][F-]{\verte{`(!)}\verte{`(!)}}\ar@{-}[r] &
*++[o][F-]{\phantom{`(!)}\phantom{`(!)}}  } \qquad\qquad \xymatrix {
*++[o][F-]{\verte{`(!)}\phantom{`(!)}}\ar@{-}[r] &
*++[o][F-]{\verte{`(!)}\phantom{`(!)}}  } \qquad\qquad \xymatrix {
*++[o][F-]{\phantom{`(!)}\phantom{`(!)}}\ar@{-}[r] &
*++[o][F-]{\verte{`(!)}\verte{`(!)}}  }
\] There are two players, \emph{Left} and \emph{Right}.  The leftmost
position above is the winning position for \emph{Left} and the
rightmost position is the winning position for \emph{Right}.  In other
words, the one who owns both token is the winner.  Let us call the
positions $L$, $C,$ and $R$ respectively.  One plays by taking a token
on the opposite node.

\subsubsection{A first tactic: Foresight}
\label{sec:ByL1}

A player realizes that she can win by taking the opponent's token
faster than the opponent can react, i.e., player \emph{Left} can
convert $C$ into $L$ by outpacing player \emph{Right}. Player
\emph{Right}, in turn, can convert $C$ into $R$. This version of the
game has two singleton equilibria: $L$ and $R$.  This is described by
the following conversion
\[
\begin{psmatrix}[colsep=1.2cm,rowsep=1.2cm,nodesep=5pt]
  & [name=L]{L} & [name=C]{C}& [name=R]{R}
  \ncline[arrows=->,linewidth=.03,linecolor=blue,linestyle=dashed]{C}{R}
  \psset{arrowscale=.5}\ncline[arrows=->,linewidth=.08,arrowscale=.6,linecolor=red,linestyle=dashed]{C}{L}
\end{psmatrix}
\]
preference is
\[
\begin{psmatrix}[colsep=1.2cm,rowsep=1.2cm,nodesep=5pt]
  & [name=L]{L} & [name=C]{C}& [name=R]{R}
\ncarc[arrows=->,linewidth=.03,linecolor=blue,linestyle=dotted,arcangle=30]{C}{R}
\ncarc[arrows=->,linewidth=.03,linecolor=blue,linestyle=dotted,arcangle=30]{L}{C}
\psset{arrowscale=.5}
\ncarc[arrows=->,linewidth=.08,arrowscale=.6,linecolor=red,linestyle=dotted,arcangle=30]{C}{L}
\ncarc[arrows=->,linewidth=.08,arrowscale=.6,linecolor=red,linestyle=dotted,arcangle=30]{R}{C}
\end{psmatrix}
\]

\medskip

where $\begin{psmatrix}[colsep=1cm,arrowscale=.5]
  [name=C]& [name=R]
\ncline[arrows=->,linewidth=.08,arrowscale=.6,linecolor=red,linestyle=dotted,arcangle=30]{C}{R}
\end{psmatrix}$ is the preference
for \emph{Left} and $\begin{psmatrix}[colsep=1cm]
  [name=C]& [name=R]
\ncline[arrows=->,linewidth=.03,linecolor=blue,linestyle=dotted,arcangle=30]{C}{R}
\end{psmatrix}$ is the preference for
\emph{Right}.  The change of mind is then:
\[\begin{psmatrix}[colsep=1.2cm,rowsep=1.2cm,nodesep=5pt]
  & [name=L]{L} & [name=C]{C}& [name=R]{R}
 \ncline[arrows=->]{C}{L} 
 \ncline[arrows=->]{C}{R}
\end{psmatrix}
\] and one sees that
there are two equilibria: namely $L$ and $R$, which means that players
have taken both token and keep them.

\subsubsection{A second tactic: Hindsight}
\label{sec:ByL2}

A player, say \emph{Left}, analyzes what would happen if she does not
act. In case \emph{Right} acts, the game would end up in $R$ and
\emph{Left} loses.  As we all know, people hate to lose so they have
an aversion for a losing position. Actually \emph{Left} concludes
that she could have prevented the $R$ outcome by acting. In other
words, it is within \emph{Left}'s power to convert $R$
into~$C$. Similarly for player \emph{Right} from $L$ to~$C$.
\[
\begin{psmatrix}[colsep=1.2cm,rowsep=1.2cm,nodesep=5pt]
  & [name=L]{L} & [name=C]{C}& [name=R]{R}
  \ncline[arrows=->,linewidth=.03,linecolor=blue,linestyle=dashed]{R}{C}
  \psset{arrowscale=.5}\ncline[arrows=->,linewidth=.08,arrowscale=.6,linecolor=red,linestyle=dashed]{L}{C}
\end{psmatrix}
\]

We call naturally
\emph{aversion} the relation that escapes from positions a player does
not want to be, especially a losing position.  Aversion deserves its
name as it works like conversion, but flies from bad position.  We get
the following change of mind:
\[\begin{psmatrix}[colsep=1.2cm,rowsep=1.2cm,nodesep=5pt]
  & [name=L]{L} & [name=C]{C}& [name=R]{R}
 \ncline[arrows=->]{L}{C} 
 \ncline[arrows=->]{R}{C}
\end{psmatrix}
\] 
where $C$ is singleton equilibrium or an Abstract Nash Equilibrium.

\subsubsection{A third tactic: Omnisight}
\label{sec:ByL3}

The players have both hindsight and foresight, resulting in a CP game
\[
\begin{psmatrix}[colsep=1.2cm,rowsep=1.2cm,nodesep=5pt]
  & [name=L]{L} & [name=C]{C}& [name=R]{R}
\ncarc[arrows=->,linewidth=.03,linecolor=blue,linestyle=dashed,arcangle=30]{C}{R}
\ncarc[arrows=->,linewidth=.03,linecolor=blue,linestyle=dashed,arcangle=30]{L}{C}
\psset{arrowscale=.5}
\ncarc[arrows=->,linewidth=.08,arrowscale=.6,linecolor=red,linestyle=dashed,arcangle=30]{C}{L}
\ncarc[arrows=->,linewidth=.08,arrowscale=.6,linecolor=red,linestyle=dashed,arcangle=30]{R}{C}
\end{psmatrix}
\]
with one change-of-mind equilibrium covering all
outcomes thus, no singleton equilibrium (or Abstract Nash Equilibrium) exists.
\[
\begin{psmatrix}[colsep=1.2cm,rowsep=1.2cm,nodesep=5pt]
  & [name=L]{L} & [name=C]{C}& [name=R]{R}
\ncarc[arrows=->,arcangle=30]{C}{R}
\ncarc[arrows=->,arcangle=30]{L}{C}
\ncarc[arrows=->,arcangle=30]{C}{L}
\ncarc[arrows=->,arcangle=30]{R}{C}
\end{psmatrix}
\]
\subsubsection{A four tactic: Defeatism}
\label{sec:defeat}

One of the player, say \emph{Left}, acknowledges that she will be
outperformed by the other, \emph{Right} in this case.  She is so
terrified by her opponent that she returns the token when she has it.
This yields the following conversion:
\[
\begin{psmatrix}[colsep=1.2cm,rowsep=1.2cm,nodesep=5pt]
  & [name=L]{L} & [name=C]{C}& [name=R]{R}
  \ncline[arrows=<-,linewidth=.03,linecolor=blue,linestyle=dashed]{R}{C}
  \psset{arrowscale=.5}\ncline[arrows=->,linewidth=.08,arrowscale=.6,linecolor=red,linestyle=dashed]{L}{C}
\end{psmatrix}
\]
We get the following change of mind:
\[\begin{psmatrix}[colsep=1.2cm,rowsep=1.2cm,nodesep=5pt]
  & [name=L]{L} & [name=C]{C}& [name=R]{R}
 \ncline[arrows=->]{L}{C} 
 \ncline[arrows=->]{C}{R}
\end{psmatrix}
\] 
where $R$ is a singleton equilibrium or an Abstract Nash Equilibrium.  

\subsubsection{Relation with evolutionary games}
\label{sec:evol}

In~\cite{nowak04:_evolut_dynam_of_biolog_games} (Fig. 2, p. 795),
Nowak and Sigmund comment a similar situation in evolutionary games.
They call the first tactic, \emph{bistability}, the second tactic,
\emph{coexistence}, the third tactic, \emph{neutrality} and the fourth
tactic, \emph{dominance} and exhibit the same pictures.

The changes of mind corresponding to the four tactics and their
correspondence with evolutionary games with two strategies can be
summarized as follows

\medskip

\begin{center}
  \begin{tabular}{||l|l|l||}
    \hline\hline
   \textbf{ Blink you lose} & \textbf{Change of Mind} & \textbf{Evol. games}
    \\\hline\hline
    Foresight & 
    \begin{psmatrix}[colsep=1.2cm,rowsep=1.2cm,nodesep=5pt]
      [name=L]{L} & [name=C]{C}& [name=R]{R} \ncline[arrows=->]{C}{L}
      \ncline[arrows=->]{C}{R}
    \end{psmatrix} 
    & Bistablity \\[5pt]\hline 
    Hindsight &
    \begin{psmatrix}[colsep=1.2cm,rowsep=1.2cm,nodesep=5pt]
      [name=L]{L} & [name=C]{C}& [name=R]{R} \ncline[arrows=->]{L}{C}
      \ncline[arrows=->]{R}{C}
    \end{psmatrix} 
    & Coexistence \\[5pt]\hline 
    Omnisight &
    \begin{psmatrix}[colsep=1.2cm,rowsep=1.2cm,nodesep=5pt]
      [name=L]{L} & [name=C]{C}& [name=R]{R}
      \ncarc[arrows=->,arcangle=20]{C}{R}
      \ncarc[arrows=->,arcangle=20]{L}{C}
      \ncarc[arrows=->,arcangle=20]{C}{L}
      \ncarc[arrows=->,arcangle=20]{R}{C}
    \end{psmatrix}
    & Neutrality \\[5pt]\hline 
    Defeatism &
    \begin{psmatrix}[colsep=1.2cm,rowsep=1.2cm,nodesep=5pt]
      [name=L]{L} & [name=C]{C}& [name=R]{R} \ncline[arrows=->]{L}{C}
      \ncline[arrows=->]{C}{R}
    \end{psmatrix}
    & Dominance
    \\[5pt]\hline\hline
  \end{tabular}
\end{center}

\subsection{The $`l$ phage as a CP game}
\label{sec:l_phage}

The $`l$ phage is a game inspired from
biology~\cite{ptashne04:_genet_switc,ptashne01:_genes_and_signal}.
The origin of the game will be given in Section~\ref{sec:grn_as_cpg},
here we give just \emph{the rules of the game.}

There are three players $cI$, $cro$ and $Env$. The game can be seen as
a game with two tokens moving on two graphs where each player may
choose to move one of the two tokens\footnote{In the asynchronous
version.}.  $Env$ moves one token from the bottom position.  The
conversion is therefore the same for the three players\footnote{Note
the difference with the square game where players had different
conversions and the same preference.  The fact that the conversion is
the same for everybody seems to be a feature of biologic
game. Moreover notice also that, unlike in strategic games, the
conversion is not transitive.}  and is given by the following
rightmost diagram:
\[ \xymatrix @R 10pt { *++[o][F-]{2}\\ &*++[o][F-]{1}&\\
*++[o][F-]{1}\ar@{-}[uu]\\ &*++[o][F-]{0}\ar@{-}[uu]\\
*++[o][F-]{0}\ar@{-}[uu]\\ cI & cro } \qquad\quad \xymatrix @R 11.5pt
{ *++[o][F-]{\phantom{`(!)}}\\ &*++[o][F-]{\phantom{`(!)}}&\\
*++[o][F-]{\verte{`(!)}}\ar@{-}[uu]\\
&*++[o][F-]{\verte{`(!)}}\ar@{-}[uu]\\
*++[o][F-]{\phantom{`(!)}}\ar@{-}[uu]\\ cI & cro } 
\xymatrix{ \l cI_2,cro_0 \r \ar@{<-->}[r] \ar@{<-->}[d] & \l cI_2,cro_1
\r\ar@{<-->}[d] \\ \l cI_1,cro_0 \r \ar@{<-->}[r] \ar@{<-->}[d] & \l
cI_1,cro_1 \r\ar@{<-->}[d] \\ \l cI_0,cro_0 \r \ar@{<-->}[r] & \l
cI_0,cro_1 \r }
\] The preference is difficult to describe as an actual game to be
played, it comes from the genetics and is specific to each player.
The philosophy is as follows: a gene prefers a position if it is
``pushed forward'' that position.  \label{preference} 
  \[ \xymatrix{\l cI_2,cro_0 \r
\ar@{.}@/_/[d]|{\rotatebox{270}{$\succ$}}
\ar@{.}@/_2.5pc/[dd]|{\rotatebox{270}{$\succ$}} & \l cI_2,cro_1 \r
\ar@{.}@/_/[d]|{\rotatebox{270}{$\succ$}}
\ar@{.}@/^2.5pc/[dd]|{\rotatebox{270}{$\succ$}}
\ar@{.}[l]|{\rotatebox{180}{$\succ$}}\\ \l cI_1,cro_0 \r
\ar@{.}@/_/[u]|{\rotatebox{90}{$\succ$}} & \l cI_1,cro_1 \r
\ar@{.}@/_/[u]|{\rotatebox{90}{$\succ$}}\ar@{.}[l]|{\rotatebox{180}{$\succ$}}\\
\l cI_0,cro_0 \r & \l cI_0,cro_1 \r } \qquad \qquad \qquad %
\xymatrix{ \l cI_2,cro_0 \r & \l cI_2,cro_1 \r
\ar@{.}[d]|{\rotatebox{270}{$\supset$}}
\ar@{.}@/^2.5pc/[dd]|{\rotatebox{270}{$\supset$}} \\ \l cI_1,cro_0 \r
& \l cI_1,cro_1 \r \ar@{.}[d]|{\rotatebox{270}{$\supset$}} \\ \l
cI_0,cro_0 \r & \l cI_0,cro_1 \r } %
\]
\[ \xymatrix{ \l cI_2,cro_0 \r & \l cI_2,cro_1 \r \\ \l cI_1,cro_0 \r
& \l cI_1,cro_1 \r \\ \l cI_0,cro_0 \r
\ar@{.}[u]|{\rotatebox{270}{$\triangleleft$}}
\ar@{.}[r]|{\rotatebox{180}{$\triangleleft$}}& \l cI_0,cro_1 \r } %
    \] \label{preference2} From the conversion and the preferences one
deduces three changes of mind.
  \[ \xymatrix{\l cI_2,cro_0 \r
\ar@{-}@/_/[d]|{\rotatebox{270}{$\succ$}} & \l cI_2,cro_1 \r
\ar@{-}@/_/[d]|{\rotatebox{270}{$\succ$}}
\ar@{-}[l]|{\rotatebox{180}{$\succ$}}\\ \l cI_1,cro_0 \r
\ar@{-}@/_/[u]|{\rotatebox{90}{$\succ$}} & \l cI_1,cro_1 \r
\ar@{-}@/_/[u]|{\rotatebox{90}{$\succ$}}\ar@{-}[l]|{\rotatebox{180}{$\succ$}}\\
\l cI_0,cro_0 \r & \l cI_0,cro_1 \r } \qquad \qquad \qquad %
\xymatrix{ \l cI_2,cro_0 \r & \l cI_2,cro_1 \r
\ar@{-}[d]|{\rotatebox{270}{$\supset$}} \\ \l cI_1,cro_0 \r &
\l cI_1,cro_1 \r \ar@{-}[d]|{\rotatebox{270}{$\supset$}} \\
\l cI_0,cro_0 \r & \l cI_0,cro_1 \r } %
    \]
\[ \xymatrix{ \l cI_2,cro_0 \r & \l cI_2,cro_1 \r \\ \l cI_1,cro_0 \r
& \l cI_1,cro_1 \r \\ \l cI_0,cro_0 \r
\ar@{-}[u]|{\rotatebox{270}{$\triangleleft$}}
\ar@{-}[r]|{\rotatebox{180}{$\triangleleft$}}& \l
cI_0,cro_1 \r } %
    \] from which we deduce the (general) change of mind of the game:
   \[ \xymatrix{\verte{\l cI_2,cro_0 \r \ar@{->}@/_/[d]} & \l
cI_2,cro_1 \r \ar@{->}@/_/[d] \ar@{->}[l]\\ \verte{\l cI_1,cro_0
\r \ar@{->}@/_/[u]} & \l cI_1,cro_1 \r
\ar@{->}@/_/[u]\ar@{->}[l]\ar@{->}[d] \\ \l cI_0,cro_0 \r
\ar@{->}[u] \ar@{->}[r] & \rouge{\l cI_0,cro_1 \r} }
    \] One sees one singleton equilibrium namely $\l cI_0,cro_1\r$
(called the \emph{lyse}) and one dynamic equilibrium namely $\{\l
cI_2,cro_0\r, \l cI_1,cro_0\r\}$ (called the \emph{lysogen}).

\section{Formal presentation of CP games}
\label{sec:CP_games}

To define a CP game we have to define four concepts:
\begin{itemize}
\item a set \A{} of \emph{agents},
\item a set \S{} of \emph{situations},
\item for every agent $a$ a relation $\conv{a}$ on \S, called
\emph{conversion},
\item for every agent $a$ a relation $\pref{a}$ on \S, called
\emph{preference}.
\end{itemize}

From these relations we are going to define a relation called
\emph{change of mind}.  Before let us introduce formally what a game
is.

\begin{definition}[Game]
A game is a 4-uple $\l \A, \S,(\conv{a})_{a`:\A}, (\pref{a})_{a`:\A}\r$.
\end{definition}

\begin{example}[Square game 1rst version] For the first version of the
square game we have:
  \begin{itemize}
  \item $\A = \{\Alice, \Beth\}$,
  \item $\S = \{1|2, 1|3, 1|4, 2|3, 2|4, 2|1, 3|4, 3|1, 3|2, 4|1, 4|2,
4|3\}$,
  \item Conversions $\conv{\Alice}$ and $\conv{\Beth}$ are given by
Figure~\ref{fig:conv_pref_square} left,
  \item $\pref{\Alice}$ is the same as $\pref{\Beth}$ and this
relation is given by Figure~\ref{fig:conv_pref_square} right.
  \end{itemize}
\end{example}

\subsection{Abstract Nash equilibrium or singleton equilibrium}
\label{sec:sing_seq}

Let us look at a first kind of equilibria.

\begin{definition}[Abstract Nash equilibrium or singleton equilibrium]
A \emph{singleton equilibrium} is a situation $s$ such that:
\[ `A a`:\A, s'`:\S \quad . \quad (s \conv{a} s') \ \Longrightarrow\
\neg(s \pref{a} s').\] We write $\EqFct{aN}{\cpG{}}{s}$ (aN stands for
\emph{abstract Nash}).
\end{definition}

In the previous paragraphs, we have seen examples of singleton
equilibria.  If we are at such an equilibrium, this is fine, but if
not, we may wonder how to reach an equilibrium.  If $s$ is not an
equilibrium, this means that $s$ fulfills
\[ `E s'`:\S \quad . \quad s \conv{a} s' \wedge s \pref{a} s'\] which
is the negation of
\[`A s'`:\S \quad . \quad (s \conv{a} s') \ \Longrightarrow\ \neg(s
\pref{a} s').\] The relation $s \conv{a} s' \wedge s \pref{a} s'$
between $s$ and $s'$ is a derived one.  Let us call it \emph{change of
mind for $a$} and write it $\CoM{a}$.  We say that $a$ changes her
mind, because she is not happy with $s$ and hopes that following
$\CoM{a}$ she will reach not necessary the equilibrium, but at least a
better situation.  Actually since we want to make everyone happy, we
have to progress along all the $\CoM{a}$'s.  Thus we consider a more
general relation which we call just \emph{change of mind} and which is
the union of the $\CoM{a}$'s. We define this new relation as the union
of the changes of mind of the agents.  \[\gCoM \ \bydef \
\bigcup_{a`:\A} \CoM{a}.\] Now suppose that we progress along $\gCoM$.
What happens if we reach an $s$ from which we cannot progress further?
This means
\[`A a`:\A, s'`:\S \quad . \quad \neg (s \conv{a} s' \wedge s \pref{a}
s')\] in other words, $s$ is an equilibrium.  Hence to reach an
equilibrium, we progress along $\gCoM$ until we are stuck.  In graph
theory, a vertex from which there is no outgoing arrow is called an
\emph{end point} or a \emph{sink}.  In relation theory it is called a
\emph{minimal element}:
\[\xymatrix @C 12pt @R 12pt {\ar@{->>}[dr] && \ar@{->>}[dl]\\ &`(!)
&\\ \ar@{->>}[ur] && \ar@{->>}[ul]}
\] Thus we look for end points in the graph.
\subsection{Dynamic equilibrium}
\label{sec:dyn_eq}

Actually this progression along $\gCoM$ is not the panacea to reach an
equilibrium. Indeed it could be the case that this progression never
ends, since we enter a perpetual move (think at the square game 2nd
version, Figure~\ref{fig:conv_CoM_square3}).  Actually we identify
this perpetual move as a second kind of equilibrium.

\subsubsection{Strongly connected components}
\label{sec:SCC}

Here it is relevant to give some concepts of graph theory.  A
graph\footnote{In this paper, when we say ``graph'' , we mean always
``oriented graph'' or ``digraph''.} is \emph{strongly connected}, if
given two nodes $n_1$ and $n_2$ there is always a path going from
$n_1$ to $n_2$ and a path going from $n_2$ to $n_1$.  Not all the
graphs are strongly connected, but they may contain some maximal
subgraphs that are strongly connected; ``maximal'' means that one
cannot add nodes without breaking the strong connectedness.  Such a
strongly connected subgraph is called a \emph{strongly connected
components}, \emph{SCC} in short.

\agamewithtwoSCCs The graph below has six SCC's:
\[ \xymatrix{
&&&\brown{`(!)}\ar@{->>}@/^/[dr]&&&&`(!)\ar@{->>}@/^/[dl]\\
\verdir{`(!)} \ar@{->>}@/^/[dr] &&&&
\verte{`(!)}\ar@{->>}@/^/@(ur,ul)[rr]\ar@{->>}@/^/[dl]&&
\verte{`(!)}\ar@{->>}@/^/@(dl,dr)[ll]\ar@{->>}@/^/[dr]\\ &
\rouge{`(!)}\ar@{->>}@/^/[ddrr] &&\rouge{`(!)}\ar@{->>}@/^/[ll]
\ar@{->>}@/^/[dr]&&&& \bl{`(!)}\ar@{->>}@/^/[dl]&\\
&&&&\rouge{`(!)}\ar@{->>}@/^/[dl]&& \bl{`(!)}\ar@{->>}@/^/[rr]&&
\bl{`(!)}\ar@{->>}@/^/[ul]\\
&\rouge{`(!)}\ar@{->>}@/^/[uu]&&\rouge{`(!)}\ar@{->>}@/^/[ll]\ar@{->>}@/^/[uu]&&&&&
}
\] The graph of Figure~\ref{fig:conv_CoM_square3} has two SCC's
(Figure~\ref{fig:2SCC}).

From a graph, we can deduce a new graph, which we call the
\emph{reduced graph} (or \emph{condensation}~\cite{Baase78}), whose
nodes are the SCC's and the arcs are given as follows: there is an arc
from an SCC $S_1$ to an SCC $S_2$ (assuming that $S_1$ is different
from $S_2$), if there exists a node $n_1$ in $S_1$, a node $s_2$ in
$S_2$ and an arc between $n_1$ and $n_2$.  By construction the reduced
graph has no cycle and its strongly connected components are
singletons.  The reduced graphs associated with the graphs given above
are as follows:
\[ \xymatrix @C 12pt @R 12pt {
&&&\brown{`(!)}\ar@{->>}@/^/[drr]&&&&`(!)\ar@{->>}@/^/[dll]\\
\verdir{`(!)} \ar@{->>}@/^/[drrr]
&&&&&\verte{`(!)}\ar@{->>}@/^/[dll]\ar@{->>}@/^/[drr] \\ &&&
\rouge{`(!)}&&&& \bl{`(!)}  } \qquad\qquad \xymatrix @R 12pt
{&\\\rouge{\{1|`w\}} \ar@{-->>}[r] & \{1|3, 1|2, 1|4,...\}}
\]

\subsubsection{Dynamic equilibria as strongly connected components}
\label{sec:dyn_eq_as_SCC}

At the price of extending the notion of equilibrium, we can prove that
there is always an equilibrium in finite non degenerated games, i.e., in games with a
finite non zero number of game situations.  Indeed given a graph, we compute
its reduced graph.  Then in this reduced graph, we look for end
points.  There is always such an end point since in a finite acyclic graph
(the reduced graph is always acyclic) there exists always at least an
end point.

\begin{center} \doublebox{\parbox{.7\textwidth}{\ \textsl{\emph{CP
Equilibria} are \emph{end points} in the reduced graph.}\ }}
\end{center}

We write $\EqFct{CP}{\cpG{}}{A}$ to say that the subset $A$ of
situations is a CP equilibrium.  We can now split equilibria into two
categories?
\begin{enumerate}
\item \emph{CP Equilibria} (i.e., \emph{Dynamic equilibria}) are
equilibria associated with an SCC and may contain many situations.
\item \emph{Abstract Nash Equilibria} (aka \emph{Singleton
equilibria}) are equilibria associated with an SCC that contains
exactly one situation, i.e., associated with an SCC which is a
singleton, hence the name singleton equilibrium.  There are specific
dynamic equilibria.
\end{enumerate}

Tarjan~\cite{Tarjan} has shown that the reduced graph can be computed
in linear time w.r.t. the numbers of nodes and edges of the original graph.
Therefore CP equilibria can be computed in linear time in the number
of game situations and edges in the change of mind relation, which provides an efficient
algorithm to compute CP equilibria.

\section{What are CP good for?}
\label{sec:motivations}

After the success of strategic games over years, one may wonder why we
introduce a new concept, namely CP games.  The first nice feature is a
theorem that says that \emph{there always exists an equilibrium}.  We
know that pure strategic games do not enjoy that property and that to
obtain such equilibria, Nash had to extend the concept of strategic
game to this of probabilistic games.  Similarly we have relaxed the
notion of equilibrium to this of CP equilibrium.

Beside abstract Nash equilibria that are really like those of strategic games. CP
games have other equilibria that biologists called \emph{dynamic equilibria}
and that correspond to phenomena they actually consider.  Physicists
speak about \emph{stationary states} in that case.

Economists know that the concept of payoff is somewhat
artificial\footnote{See for instance Osborne's introduction of his
  textbook~\cite{osborne04a}.}.  In CP games \emph{the concept of
  payoff is completely abandoned}, no number are attached to
situations and a general relation between situations is proposed instead.

In normal form strategic games, moves from one situation to another
are tightly ruled and strong restrictions are imposed, namely right
and left moves for one player, back and forth moves for another player
and up and down moves for a third, etc., unlike CP games where
\emph{very general moves between situations ruled by the conversions
  are allowed}, like diagonal moves for instance or on the opposite
more restricted moves like horizontal or vertical moves to a neighbor
situation only (see the $\lambda$ phage).  The flexibility of the
conversions and the preferences makes possible to formalize many
situations, like some that occur in biology.

It is known that games are a good framework to analyze models where
the principle of causality fails.  CP~games allow analyzing a larger class of
models.


\section{Gene regulation networks as CP games}
\label{sec:grn_as_cpg}

In the $`l$ phage, levels $0, 1, 2$, for a gene, correspond to levels
of activation or levels of concentration of the corresponding protein.
Thus $cI$ has three levels. $0$ corresponds to the gene being inactive
(the protein is absent), $1$~corresponds to the gene being moderately
active (the protein is present but moderately concentrated),
$2$~corresponds to the gene being highly active (the protein is
concentrated). On the other hand, $cro$ has two levels of activation,
corresponding to the gene being inactive or active.  $Env$ has only
one level, it is always active.  A gene can move from one level at a
time, as translated by the conversion relation on
page~\pageref{sec:l_phage}.  It has been shown that $cI$ is a
repressor for $cro$ and a repressor for itself at level~$2$ and an
activator for itself at level~$1$.  This leads to the preference
$\xymatrix{\ar@{.}[r]|{\succ}&}$ for $cI$ on the left of diagram on
page~\pageref{preference}.  On the other hand, it has been shown that
$cro$ is a repressor for $cI$, this leads to the preference and an
activator for itself at level~$1$.  This leads to the preference
$\xymatrix{\ar@{.}[r]|{\supset}&}$ for $cro$ on the right of diagram on
page~\pageref{preference}.  Moreover when both genes are inactive, the
\emph{environment} may lead to activate either $cI$ or $cro$, this
leads to the preference
$\xymatrix{\ar@{.}[r]|{\triangleright}&}$ of the diagram
on page~\pageref{preference2}.

The two equilibria correspond to two well-known states of the $`l$
phage: the \emph{lyse} and the \emph{lysogen}, which the phage always
reaches.  In particular the lysogen $\{\l cI_2,cro_0\r, \l
cI_1,cro_0\r\}$ is a relatively stable state, where the phage seems
inactive (dormant state). This is due to the fact that the
concentration of the protein associated to $cI$ is controlled: if it
is too concentrated, a repression process makes the concentration to
decrease and vice-versa if the concentration is too low an activation
process makes it to increase.  These antagonistic actions maintain the
concentration at an intermediate level and the associated state is stable.
The state $\l cI_0,cro_1\r$ corresponds to what is called the
\emph{lyse} of the $`l$~phage.

What is amazing in the presentation as CP games is that these states
are actually computed as CP equilibria.  Somewhat connected approaches are~\cite{Thomas73,ThieffryThomas:DynBRN-II-phage-lambda}.

\section{Chinese Wall information security and corporate liability}

A main claim of this article is that CP games, simple as they are, is a natural formalism whose conversion/preference distinction is of wider relevance. We shall further justify the claim in this section, with an example chosen because of the succinctness of its CP-game presentation and because the conversion/preference distinction is of stand-alone interest in the context of the example, without any consideration of equilibrium analysis.\\
The concept of Chinese Wall information security pertains to the prevention of insider trading, and more generally the insulation of insider knowledge \cite{BrewerNash:ChineseWall89}. Chinese Wall requirements are codified in laws in many countries, and are interesting to informaticians in part because Chinese Wall security is different from military-style \emph{need to know}. In particular, any information is in principle accessible to the subjects in question, but access is only granted if the subject is not already in possession of information that could create a conflict of interest. Formally, we consider a set of \emph{subjects}, $P$ (for people), a set of \emph{interests classes}, $I$, a set of \emph{companies}, $C$, with \emph{interest classification} function $\mathcal{I}:C\rightarrow I$, and a set of \emph{objects}, $O$, with \emph{ownership} function $\mathcal{C}:O\rightarrow C$. The typical scenario is that the subjects are consultants and the interest classes consists, for example, of \texttt{bank}, \texttt{oil company}, etc., with the requirement that no consultant handles objects, i.e., information, for more than one, e.g., \texttt{bank}. In other words, we are considering a game played by the subjects, $\A=P$, over complete accounts of what objects each subject has had access to, $\S=\otimes_{p\in P} 2^O$. Writing $s_p$ for the $p$-projection of an $s\in\S$, \cite[Axiom 2]{BrewerNash:ChineseWall89} that governs when a subject, $p$, is allowed to gain access to an object amounts to the conversion relation where $s\conv{p}s'$ iff
\[\begin{array}{l}
    \forall p' \,.\, p\neq p' \Rightarrow s_{p'} = s'_{p'}\\
    \wedge\\
    \exists o \,.\, s'_p = s_p\cup\{o\}
    \wedge
    (\forall o' \in s_p \,.\,
      \mathcal{I}(\mathcal{C}(o)) \neq \mathcal{I}(\mathcal{C}(o'))
      \vee
      \mathcal{C}(o) = \mathcal{C}(o'))
  \end{array}\]
In words, subject $p$ may convert $s$ to $s'$ if the situations only differ by some object, $o$, being added to the $p$-projection and, for all other objects in $s_p$, $o$ either belongs to a different interest class or hails from the same company. By \cite[Axiom 3]{BrewerNash:ChineseWall89}, we are only interested in situations that can be reached from the situation where no subject has had access to any object, $\overrightarrow{\emptyset}$. With this, \cite[Theorem 2]{BrewerNash:ChineseWall89} says: ``A subject can at most have access to one company dataset in each conflict of interest class''. Formally, we have the following.
\begin{definition} A state, $s\in \S$, has \emph{no insider trading} if
\begin{eqnarray*}
\mathrm{NIT}(s) &\triangleq& \forall p\in P \,.\,
 \forall o_1, o_2 \in s_p \,.\,
   \mathcal{I}(\mathcal{C}(o_1)) \neq \mathcal{I}(\mathcal{C}(o_2))
   \vee
   \mathcal{C}(o_1)=\mathcal{C}(o_2)
\end{eqnarray*}
\end{definition}
\begin{theorem} Given a \emph{Chinese Wall CP game form}, $\l P, \otimes_{p\in P} 2^O, (\conv{p})_{p\in P}\r$, no derived \emph{Chinese Wall CP game} can reach a situation with insider trading.
\[\forall(\pref{p})_{p\in P} \,.\, \forall s \,.\,
   \overrightarrow{\emptyset} \CoM{}\!\!^*\ s
   \Rightarrow \mathrm{NIT}(s)\]
\end{theorem}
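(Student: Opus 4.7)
The plan is a straightforward induction on the length of the $\CoM{}^*$ sequence from $\overrightarrow{\emptyset}$ to $s$. Since $\CoM{} = \bigcup_{p\in P} \CoM{p}$ and each $\CoM{p}$ is by definition a subset of $\conv{p}$, it suffices to prove the stronger statement that NIT is preserved along any single conversion step, for any player $p$. The preference relations then play no role at all, which is exactly why the quantifier $\forall(\pref{p})_{p\in P}$ appears harmlessly in front.

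For the base case, $\overrightarrow{\emptyset}$ trivially satisfies $\mathrm{NIT}$ since every $s_p$ is empty, so the inner universally quantified statement is vacuous. For the inductive step, assume $\mathrm{NIT}(s)$ and $s \conv{p} s'$. I need to verify $\mathrm{NIT}(s')$. Unfold the conversion: for every $p'\neq p$ we have $s'_{p'} = s_{p'}$, so the NIT condition for those projections is inherited. For the $p$-projection, we have $s'_p = s_p \cup \{o\}$ for some object $o$ satisfying the side condition $\forall o'\in s_p \,.\, \mathcal{I}(\mathcal{C}(o))\neq\mathcal{I}(\mathcal{C}(o'))\,\vee\,\mathcal{C}(o)=\mathcal{C}(o')$. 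To check $\mathrm{NIT}(s')$ on $s'_p$, I take an arbitrary pair $o_1,o_2\in s_p\cup\{o\}$ and split into three cases: (i) both in $s_p$, handled by $\mathrm{NIT}(s)$; (ii) both equal to $o$, trivial since $\mathcal{C}(o_1)=\mathcal{C}(o_2)$; (iii) exactly one equals $o$ and the other lies in $s_p$, handled directly by the conversion side condition.

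Thus conversion preserves $\mathrm{NIT}$, hence $\CoM{}$ does, hence $\CoM{}^*$ does by induction, which yields the theorem. There is no serious obstacle here; the only mildly delicate point is recognising that the theorem decouples into two independent observations, namely (a) $\CoM{} \subseteq \bigcup_p \conv{p}$ regardless of the preferences, and (b) the Brewer–Nash axiom for $\conv{p}$ is precisely designed so that adding a single permitted object to a NIT-state yields a NIT-state. Everything else is bookkeeping.
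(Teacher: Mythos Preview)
Your proof is correct and is exactly the ``logically straightforward'' argument the paper alludes to without spelling out: the paper gives no proof beyond that phrase, and your induction on the length of the change-of-mind sequence, reducing to preservation of $\mathrm{NIT}$ under a single conversion step, is the natural way to fill it in. Your observation that the inclusion $\CoM{}\subseteq\bigcup_p\conv{p}$ makes the preferences irrelevant is precisely the point the paper highlights when it says the result is ``universally quantified over the family of preference relations''.
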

While logically straightforward, the point of the CP-game version of the theorem is that it is universally quantified over the family of preference relations. In other words, the theorem explicitly states that for a company that implements Chinese Wall regulations for its subjects (the conversion relations), no matter what those employees may be tempted to do (the preference relations), insider trading can only take place if one of the employees breaks the company's rules. This means that the CP-game version of the result formalizes a notion of corporate liability protection, which is directly relevant to the study of Chinese Wall information security.

\section{Conversion or preference, how to choose?}
\label{sec:choose}

The attentive reader may have noticed that what counts to compute
equilibria is the \emph{change of mind} and that keeping the same set
of equilibria there is some freedom on the conversion and the
preference provided one keeps the same change of mind.  More
precisely, we have
  \[
  \begin{array}{rcl@{\qquad}l}
    \CoM{a} & = &  \conv{a} \cap\ \pref{a}\\
    & = & (\conv{a} \cup \ R)\ \cap \pref{a}  & \textrm{when~} R \ \cap
    \pref{a} = \emptyset \\
    & = & \conv{a} \cap\ (\pref{a} \cup T)  & \textrm{when~} T \ \cap
    \conv{a} = \emptyset
  \end{array}
\]

On another hand, one notices that in some examples, the preference is
independent of the agent whereas, in others, the conversion is
independent of the agent.  It seems that this is correlated with the
domain of application.  In particular, we may emit the following
hypothesis.  In biology, conversion is physics and chemistry, whereas
preference is the part that cannot be explained by physics and chemistry, then we may induce that change
of mind (combination of physics and true biology) is life.  Indeed since
physics and chemistry is the same for everyone, it makes sense to say
that conversion is the same for everybody, whereas, due to evolution and biological effects,
preference, changes with agents.

\section{Fixed point construction, and equilibria in infinite games}\label{sect:discrete-FP}

For proving the existence of a fixed point for every probabilistic
game, Nash~\cite{Nash50} used Kakutani's fixed point theorem.  Since we
deal with discrete games, we present in this section a proof of the
existence of equilibrium based on a Tarski fixed-point
theorem~\cite{Tarski:55}. Recall that Tarski's theorem uses an
\emph{update function}, say $f$, on a lattice and builds a fixed point
starting from an element, say $a$, by iteration, $a$, $f(a)$, ...,
$f^n(a)$, ...  Here the lattice is the powerset $\mathscr{P}(\Syn)$
of situations ordered by the subset order.

In analogy to Nash's update function, the function takes a subset of
situations and creates a new subset based on how the agents would like
to improve upon the old subset.

\begin{definition}[Update] Given a game \cpG{} and a subset $\syns\subseteq\Syn$ of the set of situations, let
  $\options{\syns}{\cpG{}}
    \;\eqdef
    \;\bigcup{}_{\syn\in\syns}
      \{\syn'\mid\,\freeComs{}{\syn}{\syn'}\}$.
\end{definition}

With this, we have the following result, covering all CP games.

\begin{lemma}\label{lem:lattice-FP} Given (any) \cpG{}, $\optionsName$ has a complete lattice of fixed points.\footnote{We note that complete lattices are non-empty by definition.}
\end{lemma}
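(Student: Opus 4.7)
The plan is to apply Tarski's fixed-point theorem (already cited in the paper) to $\optionsName$ regarded as a monotone map on the complete lattice $(\mathscr{P}(\Syn),\subseteq)$. Since Tarski's theorem says that the set of fixed points of a monotone endomap on a complete lattice is itself a complete lattice, it suffices to verify that $\optionsName$ is monotone for $\subseteq$. The set $\mathscr{P}(\Syn)$ is a complete lattice with meet $\bigcap$ and join $\bigcup$, so the ambient structure is in place.

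First I would verify monotonicity. Suppose $\syns_1\subseteq\syns_2$. Then
\[
\optionsName(\syns_1)
 \;=\;\bigcup_{\syn\in\syns_1}\{\syn'\mid\freeComs{}{\syn}{\syn'}\}
 \;\subseteq\;\bigcup_{\syn\in\syns_2}\{\syn'\mid\freeComs{}{\syn}{\syn'}\}
 \;=\;\optionsName(\syns_2),
\]
because the right-hand union ranges over a superset of indices and each summand depends only on $\syn$. Tarski's theorem then delivers directly that the set of fixed points of $\optionsName$ forms a complete lattice under $\subseteq$.

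I would then strengthen the observation to make the fixed points transparent, which will be useful downstream. Because $\freeComsName{}$ is the reflexive, transitive closure of change of mind, one has $\syn\in\{\syn'\mid\freeComs{}{\syn}{\syn'}\}$ for every $\syn$, giving extensivity $\syns\subseteq\optionsName(\syns)$; and transitivity of $\freeComsName{}$ gives idempotence $\optionsName(\optionsName(\syns))=\optionsName(\syns)$. Thus $\optionsName$ is in fact a closure operator, and its fixed points are exactly the subsets of $\Syn$ that are closed under $\freeComsName{}$. This furnishes an explicit description: meets in the fixed-point lattice are intersections, and joins are obtained by applying $\optionsName$ to unions.

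There is no real obstacle: the only thing to check is the one-line monotonicity argument above, after which Tarski does all the work. The non-emptiness caveat in the footnote is automatic, since $\Syn$ itself is always a fixed point (it is closed under any relation by virtue of being the whole carrier).
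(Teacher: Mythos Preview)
Your proof is correct and is precisely the intended argument: the paper does not spell out a proof of this lemma, but it explicitly frames the section around Tarski's fixed-point theorem on the powerset lattice, and the very properties you establish (order-preservation, post-fixpointedness, idempotence) are exactly those invoked in the paper's proof of the subsequent lemma. There is nothing to add.
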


Not all fixed points will correspond to equilibria but the equilibria
are the least, non-empty fixed points of the update function.

\begin{lemma}\label{lem:least-FP-EqN} Given
  \cpG{}, with change-of-mind relation $\freeComName{}$.
\[\begin{array}{c}
     \EqFct{CP}{\cpG{}}{\syns}\\
\Updownarrow\\
    \options{\syns}{\cpG{}}=\syns
    \ \wedge\
    \syns\neq\emptyset
    \ \wedge\
    (\forall \syns' \,.\,
     \emptyset\subsetneq\syns'\subsetneq\syns
       \Longrightarrow
       \options{\syns'}{\cpG{}}\not\subseteq\syns')
  \end{array}\]
\end{lemma}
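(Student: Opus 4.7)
The plan is to prove both directions directly against the description of CP equilibria from Section~\ref{sec:dyn_eq_as_SCC}, namely: $\syns$ is a CP equilibrium iff it is a terminal (end-point) SCC of the change-of-mind graph. The main enabling observation is that because $\freeComsName{}$ is the reflexive-transitive closure of the change of mind, the operator $\optionsName$ is monotone, inflationary (so $\syns\subseteq\options{\syns}{\cpG{}}$ always), and idempotent. In particular, $\options{\syns'}{\cpG{}}=\syns'$ is equivalent to $\options{\syns'}{\cpG{}}\subseteq\syns'$, i.e., to $\syns'$ being closed in the sense that no change-of-mind edge leaves $\syns'$.

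For the forward direction, suppose $\syns$ is a terminal SCC. It is non-empty since SCCs are. Since no change-of-mind edge goes from $\syns$ to $\Syn\setminus\syns$, $\syns$ is closed and $\options{\syns}{\cpG{}}=\syns$. For the minimality condition, pick any $\syns'$ with $\emptyset\subsetneq\syns'\subsetneq\syns$ and any $s'\in\syns'$. By strong connectedness of $\syns$, every element of $\syns$ is reachable from $s'$, so $\syns\subseteq\options{\{s'\}}{\cpG{}}\subseteq\options{\syns'}{\cpG{}}$. Since $\syns\not\subseteq\syns'$, we conclude $\options{\syns'}{\cpG{}}\not\subseteq\syns'$.

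For the converse, assume $\syns$ is a non-empty fixed point such that no proper non-empty subset is a fixed point. Fix any $s\in\syns$. Monotonicity together with $\options{\syns}{\cpG{}}=\syns$ gives $\options{\{s\}}{\cpG{}}\subseteq\syns$; idempotence says $\options{\{s\}}{\cpG{}}$ is itself a fixed point; and reflexivity of $\freeComsName{}$ makes it non-empty. The minimality hypothesis then forces $\options{\{s\}}{\cpG{}}=\syns$, which says every element of $\syns$ is reachable from $s$. Varying $s$, $\syns$ is strongly connected. Closedness of $\syns$ prevents it from sitting strictly inside a larger strongly connected set (any such extension would need an incoming edge from $\syns$, contradicting $\options{\syns}{\cpG{}}=\syns$), so $\syns$ is a maximal SCC; together with closedness, it is a terminal SCC, hence an end point of the reduced graph, hence a CP equilibrium.

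The only delicate point is the converse: the paper's setting explicitly includes \emph{infinite} games, where the condensation of the change-of-mind graph may fail to have sinks, so I avoid any ``pick a sink SCC inside $\syns$'' argument. Routing everything through $\options{\{s\}}{\cpG{}}$ for a single $s$ lets the minimality hypothesis, combined with idempotence, do all the work of forcing strong connectedness without any finiteness or well-foundedness assumption on $\Syn$ or on the condensation.
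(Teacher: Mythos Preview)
Your proof is correct and follows essentially the same route as the paper: both directions are handled directly, and for the converse you use exactly the paper's key observation that $\optionsName$ is monotone, inflationary, and idempotent, then apply the minimality hypothesis to $\options{\{s\}}{\cpG{}}$ to force it to equal $\syns$. The paper phrases this last step as a proof by contradiction (assume $\neg(\freeComs{}{\syn_1}{\syn_2})$, derive $\options{\{\syn_1\}}{\cpG{}}\subsetneq\syns$, then contradict idempotence via the minimality clause), whereas you argue it directly, but the content is the same.
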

\begin{proof} By two direct arguments. The only
interesting step is from bottom to top and showing that, for any two
${\syn_1,\syn_2\in\syns}$, we have ${\freeComs{}{\syn_1}{\syn_2}}$.
We first note that $\optionsName$ is post-fixpointed:
${\syns\subseteq\options{\syns}{\cpG{}}}$, idempotent:
${\options{\options{\syns}{\cpG{}}}{\cpG{}}=\options{\syns}{\cpG{}}}$, and order-preserving:
${\syns_1\subseteq\syns_2\,\Longrightarrow\,\options{\syns_1}{\cpG{}}\subseteq\options{\syns_2}{\cpG{}}}$.
By order-preservation and ${\options{\syns}{\cpG{}}=\syns}$, we have
${\options{\{\syn_1\}}{\cpG{}}}\subseteq\syns$. If
${\neg(\freeComs{}{\syn_1}{\syn_2})}$, then
${\syn_2\in\syns\setminus\options{\{\syn_1\}}{\cpG{}}}$, i.e.,
${\options{\{\syn_1\}}{\cpG{}}\subsetneq\syns}$.  By post-fixpointed-ness,
${\options{\{\syn_1\}}{\cpG{}}}$ is non-empty and, by assumption of
least-ness, we may therefore conclude
${\options{\options{\{\syn_1\}}{\cpG{}}}{\cpG{}}\not\subseteq\options{\{\syn_1\}}{\cpG{}}}$.
This contradicts idempotency, and thus
${\freeComs{}{\syn_1}{\syn_2}}$.
\end{proof}

CP equilibria are therefore \emph{atomic}, in the sense that neither
anything smaller nor anything bigger will have the same defining
properties. For finite~\cpG{}, a counting argument shows that the
complete lattice of~$\optionsName$-fixed point will have least,
non-empty elements, thus guaranteeing existence. For the infinite
case, e.g., the following unbounded change-of-mind relation will not
lead to the existence of least, non-empty elements in the fixed-point
lattice because all tails are fixed points.

\[
\xymatrix{
\bullet \ar@{->>}@/^.8pc/[r]&~\bullet \ar@{->>}@/^.8pc/[r]&~\bullet \ar@{->>}@/^.8pc/[r]&~~\bullet\ldots
}
\]


However there are infinite cases where the existence of a CP equilibrium can be guaranteed, namely when there exists an SCC $\syns$, which is extremal for the reduced change of mind.

More generally, a sufficient condition for the existence of $\EqFctName{CP}{\cpG{}}$ is that some $s$ can reach only finitely many other elements in $\S$ using $\freeComsName{}$. The condition is also necessary if we restrict attention to finite $\EqFctName{CP}{\cpG{}}$. In particular, finite games have $\EqFctName{CP}{\cpG{}}$. Because of the role played by reduced graphs above, also games with finite reduced graphs have $\EqFctName{CP}{\cpG{}}$.

\section{Conclusion}
\label{sec:conclusion}

We have presented conversion preference games as a strict extension of
strategic games and we have proved that in a finite CP game an equilibrium always exists and under some conditions in infinite games as well.  This theory is infancy and we expect it to generate as many theorems as the classical Nash game theory.  See for instance~\cite{LeRouxPhD08}.


\end{document}

